\DeclareMathAlphabet{\bm}{OML}{cmr}{bx}{it}
\DeclareMathAlphabet{\mathsf}{OT1}{cmss}{m}{n}
\DeclareMathAlphabet{\bs}{T1}{cmss}{bx}{sl}
\DeclareMathAlphabet{\ms}{T1}{cmss}{m}{sl}
\DeclareMathAlphabet{\mathpzc}{OML}{zplm}{m}{it}
\newcommand{\bg}[1]{\boldsymbol #1} 
\newcommand{\bb}{\mathbb}
\newcommand{\rs}{\mathrm}
\newcommand{\mc}{\mathcal}
\newcommand{\bh}[1]{\hat{\bm #1}}
\newcommand{\ds}{\mathds}
\newtheorem{theorem}{Theorem}
\newtheorem{lemma}{Lemma}
\newtheorem{proposition}{Proposition}
\newtheorem{remark}{Remark}
\newtheorem{definition}{Definition}
\title{Towards optimal nonlinearities for sparse recovery using higher-order statistics}
\name{Steffen Limmer$^\star$ and S\l awomir Sta\'nczak$^{\star,\dagger}$
\thanks{Acknowledgments: This work was partially supported by the Deutsche Forschungsgemeinschaft (DFG) under Grant STA 864/8-1 and an AWS in Education Research Grant award. The authors would like to thank the anonymous reviewer \#3 for his/her valuable comments.}
}
\address{$^{\star}$ Network Information Theory Group,
    Technische Universit\"at Berlin.\\		
    $^\dagger$ Fraunhofer Institute for Telecommunications, Heinrich Hertz Institute,
	Berlin, Germany.}
\newif\ifcomments
\newcommand{\com}[1]{{\color{red}{(\textbf{Comment: #1})}}}
\newcommand{\com}[1]{}
\begin{document}
\ninept

\maketitle
\begin{abstract}
  We consider machine learning techniques to develop low-latency
  approximate solutions for a class of inverse problems. More
  precisely, we use a probabilistic approach to the problem of recovering sparse stochastic
  signals that are members of the $\ell_p$-balls. In this context, we
  analyze the Bayesian mean-square-error (MSE) for two types of
  estimators: (i) a linear estimator and (ii) a structured estimator
  composed of a linear operator followed by a Cartesian product of
  univariate nonlinear mappings. By construction, the complexity of
  the proposed nonlinear estimator is comparable to that of its linear
  counterpart since the nonlinear mapping can be implemented efficiently in hardware by means of look-up
  tables (LUTs). The proposed structure lends itself to neural
  networks and iterative shrinkage/thresholding-type algorithms
  restricted to a single iteration (e.g. due to imposed hardware or
  latency constraints). By resorting to an alternating minimization
  technique, we obtain a sequence of optimized linear
  operators and nonlinear mappings that converge in the MSE objective. The result
  is attractive for real-time applications where general
  iterative and convex optimization methods are infeasible.
\end{abstract}

\begin{keywords}
Probabilistic geometry, $\ell_p$-balls, compressive sensing, nonlinear estimation, Bayesian MMSE
\end{keywords}

\section{Introduction}
Precise error estimates and phase transitions play a crucial role in
the analysis of compressed sensing recovery algorithms, where the
objective is to recover an unknown $N$-dimensional real-valued vector
signal $\bm{x} \in\bb{R}^N$ from a measurement vector $\bm{y} \in \bb{R}^M$
given by \cite{AmLoMcTr13}\com{I think that we need a reference at the end of this
  sentence}
\begin{align}\label{equ:cs_meas}
{y}_m & =  \langle \bm{a}_m,\bm{x}\rangle, \quad \forall m \in
        \{1,\hdots,M\},\ M<N.
\end{align}
Here and hereafter,\footnote{We refer to the end of this section for
  some further notational conventions.}\com{I think that we need such
  a note either as a footnote or in text}
$\langle \bm\cdot,\cdot\rangle:\bb{R}^N\times\bb{R}^N\to\bb{R}$
denotes the inner product in the Euclidean space $\bb{R}^N$, while the
matrix $\bm{A}:= [\bm{a}_1,\hdots,\bm{a}_M]^T \in \bb{R}^{M \times N}$
is a dimensionality reducing linear map that may be given or designed depending on the particular application.\com{Note that I have added $M<N$ to the equation above
  since you write ``dimensionality reducing''} Motivated by the
seminal work \cite{AmLoMcTr13}, we study a probabilistic approach to
the above recovery problem,\com{I have changed this sentence because I
  do not understand what you mean by ``probabilistic approach to the
  geometry''.  My main problem is the word ``geometry'' in this
  context. I have the same problem in the abstract. Now it is in my
  opinion clear what is meant.} with the goal of assessing and
optimizing the expected performance for a certain class of nonlinear
estimators that can be implemented efficiently in hardware. In
contrast to \cite{AmLoMcTr13}, we assume that the measurement map
$\bm{A}$ is fixed and the randomness originates from a stochastic
model of the estimand $\bm{x}$.\com{I have removed
  ``$p_{\bs{x}}(\bm{x})$" because it is not needed here and the
  necessary notation/definitions have not been introduced until here}
It is therefore evident that the performance of any estimator
(resp. recovery algorithm) will be tightly coupled to the statistical
properties of the inner products in \eqref{equ:cs_meas} with the
sought sparse random vector $\bs{x}$.\com{This sentence came later and
  I have shifted it here} Among a myriad of models that have been
proposed to analyze sparse/compressible signals at different layers of
abstraction, the set of $k$-sparse signals
$\Sigma_k:=\{\bm{x} : \lVert \bm{x} \rVert_0 \leq k\}$, $k<N$,
is frequent choice in the field of approximation theory (see e.g. \cite{Cohen2009}).\com{This statement requires
  a reference} The set $\Sigma_k$ is however of Lebesgue measure zero
in $\bb{R}^N$,\com{I think that you need to add $\bb{R}^N$ because the
  Lebesgue measure depends on the embedment} which makes the
treatment within a unified probabilistic framework difficult.\com{Is this
  sentence okay?} To overcome this limitation, we study the recovery
of sparse stochastic signals from generalized unit balls $\mc{B}_\bm{p}$ that are
equipped with the desired sparsity inducing structure for
$\bm{p}<2\cdot \bold{1}$ and are closely related to the set $\Sigma_k$ \cite{Cohen2009} (see the definition of $\mc{B}_\bm{p}$ in Lemma \ref{lem:volbp} and Fig. \ref{fig:lpballs} for an illustration).\com{I
  think that we need a reference here} In this probabilistic $\mc{B}_\bm{p}$-model, the characteristic vector $\bm{p}$ adjusts the energy concentration in subsets of largest entries (in magnitude), i.e., the \emph{sparsity} of realizations $\bm{x}$. In practice, we may use techniques from parametric density estimation to obtain estimates of the sparsity level in terms of $\bm{p}$ given some dataset.
A review of selected existing and new results is provided in Sec. \ref{sec:lbpalls}.
\begin{figure}
\centering \subfigure[$p=2$.]{
  \includegraphics[width=0.492\linewidth]{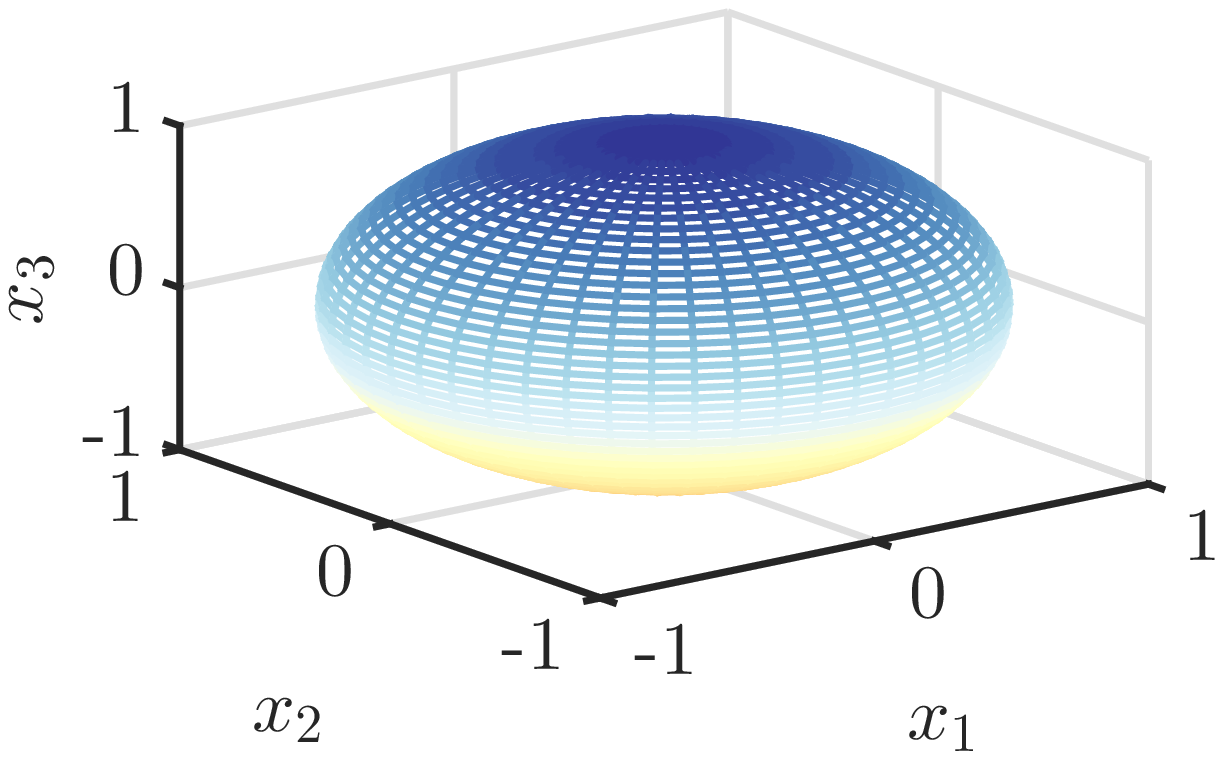}
} \hspace{-.55cm} \subfigure[$p=1$]{
  \includegraphics[width=0.492\linewidth]{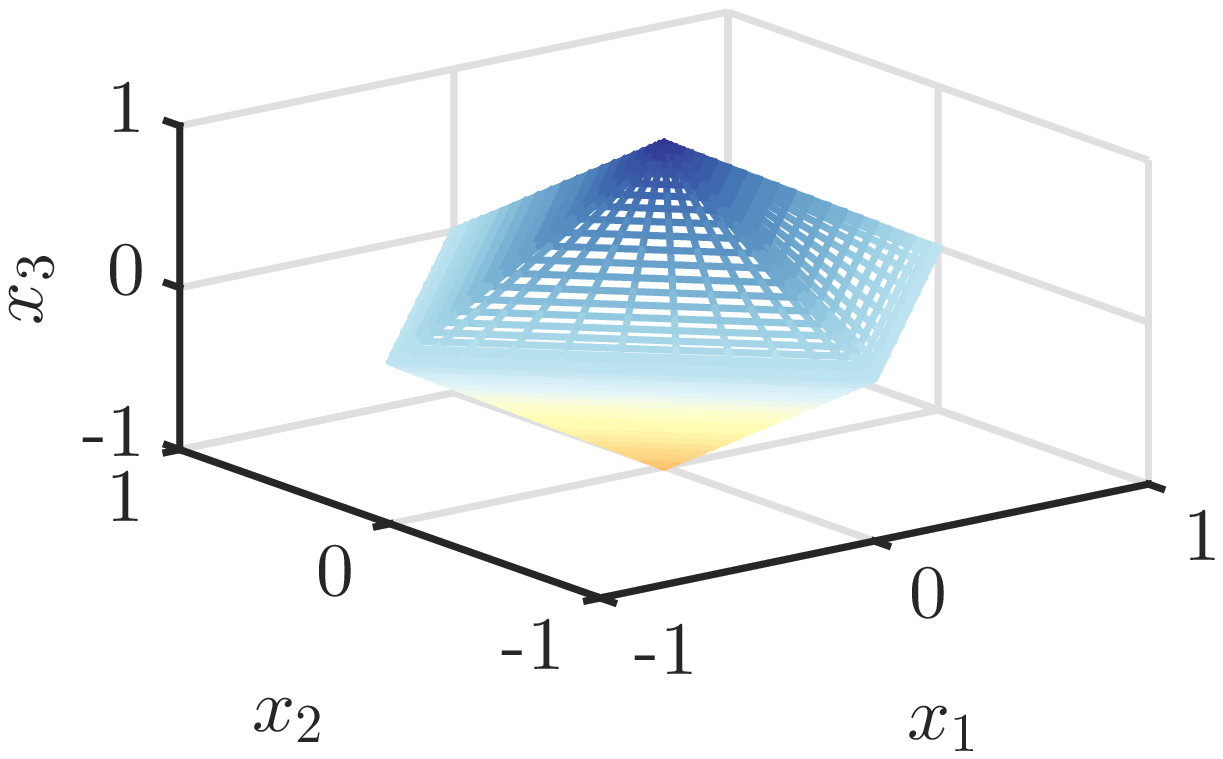}
}
\centering \subfigure[$p=0.5$.]{
  \includegraphics[width=0.492\linewidth]{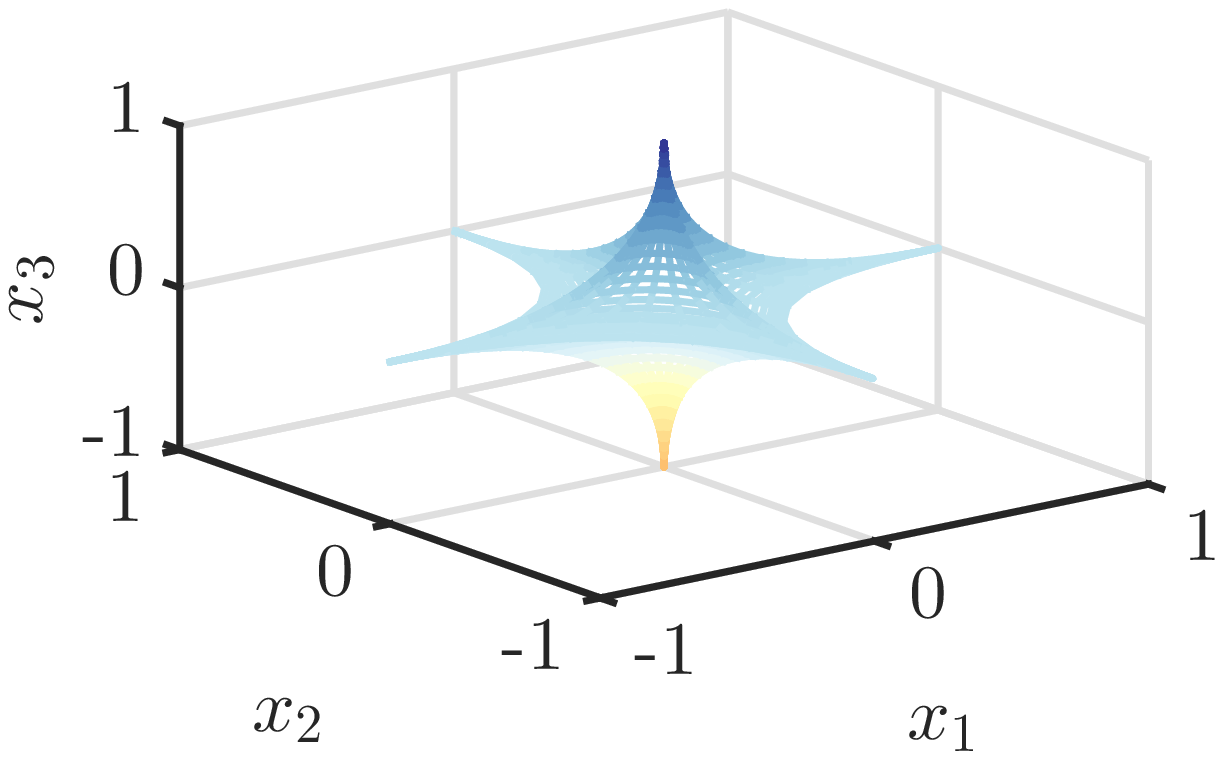}
} \hspace{-.55cm} \subfigure[$p=0.25$]{
  \includegraphics[width=0.492\linewidth]{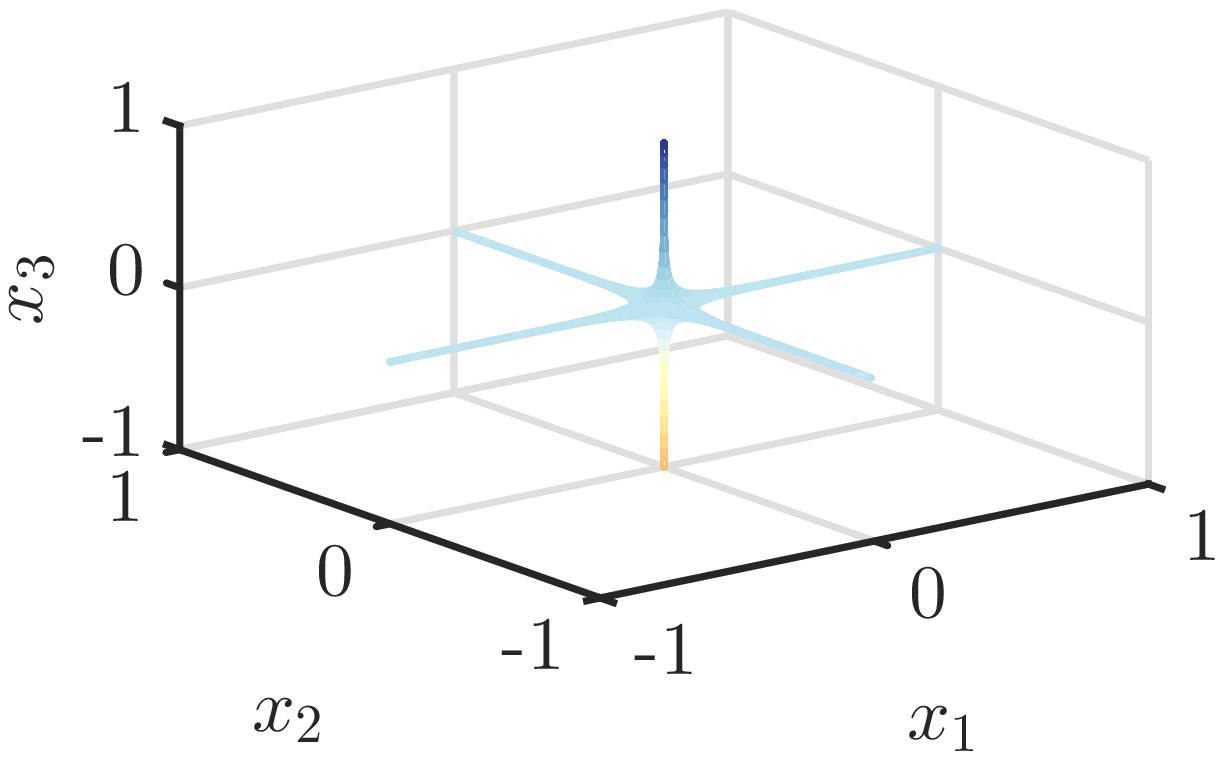}
}
\vspace{-9pt}
\caption{$\mc{B}_\bm{p}$ for various values of $\bm{p} =  p\cdot \bold{1}$.}
\label{fig:lpballs}
\end{figure}
To simplify the subsequent exposition, we study the case of a uniform distribution on $\mc{B}_\bm{p}$, and note that more general (generalized-radial) distributions are subject to future works. Therefore, in all that follows, the probability distribution $p_\bs{x}(\bm{x})$ is assumed to be 
\begin{align}\label{equ:uni_lp}
p_\bs{x}(\bm{x}) = \frac{1}{\rs{vol}( \mc{B}_\bm{p} )} \ds{1}_{\mc{B}_\bm{p}}(\bm{x}).
\end{align}
For brevity, we use $\bs{x} \sim \mc{U}(\mc{B}_p)$ to refer to the
random variable $\bs{x}$ drawn according to
\eqref{equ:uni_lp}. Owing to the
lack of space, we omit an in-depth discussion of stochastic models and
refer an interested reader to the overview article on compressible
distributions in \cite{GrCeDa12} as well as the works on various
sparse L\'{e}vy processes in \cite{UnTa10}. 
Given the measurement model \eqref{equ:cs_meas}, we derive
the \textit{Bayesian mean-squared-error (MSE)} for a structured
nonlinear estimator composed of a linear operator followed by a Cartesian product of
univariate nonlinear mappings. For a recursive structure, a computationally much simpler
approach can be found in \cite{KamMan15} using a stochastic gradient
method. While this amounts to a better scalability w.r.t. the problem
dimension, the algorithm may converge slowly, or not at all, and
missing error estimates may restrict its applicability. For the case
of a polynomial mapping in canonical form, we analyze an alternating
optimization approach that is guaranteed to converge w.r.t. the
MSE objective. The latter is shown be to computable in
closed-form as a function of higher-order inner product statistics.
\begin{remark}[Bayesian vs. classical MMSE estimation]
  We highlight that the present paper targets the Bayesian MSE as
  opposed to classical MSE estimation. In the Bayesian setting, an
  optimal estimator in the sense of an average performance criterion
  is obtained under the assumption of a prior pdf of the estimand. As
  such, the optimal Bayesian estimator for the MSE criterion is given
  by the conditional mean, which is in general hard to obtain and is
  approximated in a hardware-efficient manner in this work. On the
  other hand, in classical MSE estimation, a certain realization of
  sparse vector $\bm{x}$ is chosen and an optimal estimator for the
  particular given case is sought. For the latter case, the optimal
  estimator is often not realizable due to its dependence on the
  particular realization $\bm{x}$ (see also \cite{Kay93}[Ch. 10] for
  additional illustrative examples).
\end{remark}

\subsection{Notation}\label{sec:not}
Scalar, vector and matrix random variables are denoted by lowercase,
bold lowercase and bold uppercase sans-serif letters $\ms{x}$,
$\bs{x}$, $\bs{X}$, while the corresponding realizations by serif
letters $x$, $\bm{x}$, $\bm{X}$. The sets of reals, nonnegative reals,
positive reals, nonnegative integers and natural numbers are
designated by $\bb{R}$, $\bb{R}_{+}$, $\bb{R}_{++}$, $\bb{N}_0$ and $\bb{N}$. We use $\bold{0}$, $\bold{1}$ and $\bm{I}$ to denote the
vectors of all zeros, all ones and the identity matrix, where the size
will be clear from the context. $\rs{tr}\{\cdot\}$,
$\rs{diag}(\bm{u})$, $(\cdot)^{\odot d}$ and
$\ds{1}_{\mc{X}}: \bm{x}\to \{0,1\}$ denote the trace of a matrix, the
diagonal matrix with elements of $\bm{u}$ on the diagonal, the
hadamard (i.e. entry-wise) power and the indicator function defined as
$\ds{1}_{\mc{X}}(\bm{x})=1$ if $\bm{x} \in \mc{X}$ and $0$
otherwise. $\mc{U}(\mc{X})$ is used to denote the
uniform distribution over the set $\mc{X}$,
$\bb{E}\left[ \cdot \right]$ is the expectation
operator and $\mc{B}_\bm{p}$ is the generalized unit ball defined in Lemma \ref{lem:volbp}.

\section{A primer for signals from $\mc{B}_\bm{p}$}\label{sec:lbpalls}
Given the probability distribution in \eqref{equ:uni_lp} the first question is if $\rs{vol}(\mc{B}_p)$ can be obtained in closed form for general vectors $\bm{p}$ without using multivariate approximation techniques (e.g. cubature formulae) that are known to suffer from the so-called curse of dimensionality. It is interesting to note that an affirmative answer to this question can be traced back to works by Dirichlet on the Laplace transform \cite{Edw22} as was noted in \cite{Wan05} and appeared in different works from control theory to Banach space geometry (see \cite{CalDabTem98,BarGueMenNao05}). The respective result is restated in the following Lemma.
\begin{lemma}[Volume of generalized unit balls $\mc{B}_\bm{p}$]\label{lem:volbp}
Let $\bm{p} \in \bb{R}_{++}^N$, $\mc{B}_\bm{p}$ be given by $\mc{B}_\bm{p}:= \{\bm{x}: \sum_{n=1}^N \lvert x_n \rvert^{p_n} \leq 1\}\subset \bb{R}^N$ and
\begin{align}
\Gamma(z):= \int_0^{\infty} t^{z-1} \exp(-t) \, dt
\end{align}
denote the Gamma function (see \cite{Dav65} for a review of mathematical properties). Then, it holds that
\begin{align}
\rs{vol}(\mc{B}_\bm{p}) = \frac{2^N}{\prod_{n=1}^N p_n } \frac{ \prod_{n=1}^N \Gamma\left( \frac{1}{p_n} \right)}{\Gamma \left( 1+ \sum_{n=1}^N \frac{1}{p_n} \right) }.
\end{align}
\end{lemma}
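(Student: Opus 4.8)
The plan is to compute the Gaussian-type integral
\[
I := \int_{\bb{R}^N} \exp\!\left(-\sum_{n=1}^N \lvert x_n\rvert^{p_n}\right) d\bm{x}
\]
in two different ways and to match the results. On one hand, the integrand factorizes across coordinates, so by Fubini's theorem $I = \prod_{n=1}^N \int_{-\infty}^{\infty} e^{-\lvert t\rvert^{p_n}}\,dt$. Each univariate factor is evaluated by exploiting the even symmetry and substituting $u = t^{p_n}$ (so that $dt = \tfrac{1}{p_n}u^{1/p_n - 1}du$), which turns it into a Gamma integral, $\int_{-\infty}^\infty e^{-\lvert t\rvert^{p_n}}\,dt = \tfrac{2}{p_n}\Gamma(1/p_n)$. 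Multiplying the $N$ factors gives $I = \tfrac{2^N}{\prod_n p_n}\prod_n \Gamma(1/p_n)$, which is precisely the numerator of the claimed formula.

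On the other hand, I would express the same integral radially with respect to the gauge $g(\bm{x}):=\sum_n \lvert x_n\rvert^{p_n}$. The key structural fact is an anisotropic homogeneity: setting $V(r) := \rs{vol}(\{\bm{x}: g(\bm{x})\le r\})$ and applying the coordinate scaling $x_n \mapsto r^{1/p_n} x_n$, one checks that $g$ scales as $g \mapsto r\,g$ while the Jacobian of the map equals $\prod_n r^{1/p_n} = r^{s}$ with $s := \sum_n 1/p_n$. Hence $V(r) = r^{s}\,V(1)$ with $V(1) = \rs{vol}(\mc{B}_\bm{p})$, so the pushforward density of $g$ is $V'(r) = s\,\rs{vol}(\mc{B}_\bm{p})\,r^{s-1}$. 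Writing $I$ as a layer-cake integral over the level sets of $g$ then yields
\[
I = \int_0^\infty e^{-r}\,dV(r) = s\,\rs{vol}(\mc{B}_\bm{p})\int_0^\infty e^{-r} r^{s-1}\,dr = s\,\rs{vol}(\mc{B}_\bm{p})\,\Gamma(s).
\]

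Equating the two evaluations of $I$ and invoking the functional equation $s\,\Gamma(s) = \Gamma(s+1)$ gives
\[
\rs{vol}(\mc{B}_\bm{p}) = \frac{I}{\Gamma(1+s)} = \frac{2^N}{\prod_n p_n}\,\frac{\prod_n \Gamma(1/p_n)}{\Gamma\!\left(1+\sum_n 1/p_n\right)},
\]
as asserted. The routine parts are the univariate substitution and the convergence of the Gamma integrals, both guaranteed since $p_n>0$. The main obstacle I anticipate is making the radial/layer-cake step rigorous: one must justify reducing the $N$-dimensional integral to a one-dimensional integral over the level sets of the \emph{non-homogeneous} gauge $g$ (a co-area type argument) and verify that the anisotropic scaling correctly produces the power law $V(r)=r^{s}V(1)$. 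This is precisely where the differing exponents $p_n$ across coordinates make the bookkeeping delicate, in contrast to the isotropic $\ell_p$-ball where a single scaling suffices. An alternative that sidesteps the co-area formula is an induction on $N$ via Fubini, integrating out one coordinate at a time and tracking the resulting Beta/Gamma factors, but the integral-based argument above is shorter and more transparent.
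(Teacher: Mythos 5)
Your proof is correct and complete. Note that the paper itself does not prove Lemma \ref{lem:volbp}: it defers entirely to the citation \cite{Wan05}, so you have supplied an actual argument where the paper offers only a pointer. Your route --- evaluating $I=\int_{\bb{R}^N}\exp\left(-\sum_{n}\lvert x_n\rvert^{p_n}\right)d\bm{x}$ once by Fubini into univariate Gamma integrals, and once radially via the distribution function of the gauge $g$ --- is precisely the classical Dirichlet/Laplace-transform argument that the paper's Sec.~2 alludes to (``traced back to works by Dirichlet on the Laplace transform''), and essentially how the cited sources obtain the formula. One remark on the step you flag as the main obstacle: no co-area or surface-measure argument is needed at all. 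Writing $e^{-g(\bm{x})}=\int_{g(\bm{x})}^{\infty}e^{-r}\,dr$ and applying Tonelli gives
\begin{align}
I=\int_0^\infty e^{-r}\,V(r)\,dr=\rs{vol}(\mc{B}_\bm{p})\int_0^\infty e^{-r}r^{s}\,dr=\rs{vol}(\mc{B}_\bm{p})\,\Gamma(1+s),
\end{align}
where the only input is your scaling identity $V(r)=r^{s}V(1)$, itself an immediate consequence of the diagonal linear change of variables $x_n\mapsto r^{1/p_n}x_n$ --- the differing exponents cause no difficulty since the Jacobian of a diagonal map is just the product $\prod_n r^{1/p_n}=r^s$, and continuity of $V$ ensures the level set $\{g=r\}$ is null. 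This layer-cake/Tonelli form even spares you differentiating $V$ and interpreting $dV(r)$ as a Stieltjes measure. With that substitution the argument is airtight: all integrands are nonnegative, so Tonelli applies with no integrability check, and $p_n>0$ guarantees convergence of every Gamma integral, including $\Gamma(1+s)$ with $s=\sum_n 1/p_n>0$.
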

\begin{proof}
The proof can be found e.g. in \cite{Wan05}.
\end{proof}
As an extension of Lemma \ref{lem:volbp}, we obtain the following result for the integral as well as expectation of a monomial over $\mc{B}_\bm{p}$ w.r.t. to the measure \eqref{equ:uni_lp}, which forms the basis for the subsequent analysis.
\begin{lemma}[Expectation of monomials over $\mc{B}_\bm{p}$]\label{lem:mon_moment}
Let $\bm{x}^{\bg{\alpha}}$ denote the monomial ${x}_1^{\alpha_1} \cdots {x}_N^{\alpha_N}$ with $\bm{x}\in\bb{R}^N$ and $\bg{\alpha} \in \bb{N}_{0}^N$, $\bs{x} \sim \mc{U}(\mc{B}_\bm{p})$, and $2\bb{N}_0:=\{2\beta : \beta \in \bb{N}_0\}$ be the set of nonnegative even integers. Then, we have
\begin{align}\label{equ:int_mon}
\int_{\mc{B}_\bm{p}} \bm{x}^{\bg{\alpha}} \ d\bm{x} & = \begin{cases}
\frac{2^N}{\prod_{n=1}^N p_n } \frac{\prod_{n=1}^N \Gamma\left( \frac{\alpha_n + 1}{p_n} \right) }{\Gamma \left( 1 + \sum_{n=1}^N \frac{\alpha_n + 1}{p_n} \right) } &\text{for } \bg{\alpha} \in 2\bb{N}_0^N \\
0 &\text{otherwise}, \nonumber
\end{cases}
\end{align}
and
\begin{align}
\bb{E}_\bs{x} \left[ \bs{x}^{\bg{\alpha}} \right] = \frac{1}{\rs{vol}(\mc{B}_\bm{p})} \int_{\mc{B}_\bm{p}} \bm{x}^{\bg{\alpha}} \ d\bm{x}.
\end{align}
\end{lemma}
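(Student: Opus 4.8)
The plan is to reduce the $N$-dimensional integral over $\mc{B}_\bm{p}$ to the classical Dirichlet integral over the standard simplex, with the constant prefactor arising from a coordinate rescaling. The expectation identity itself requires no work: since $\bs{x}\sim\mc{U}(\mc{B}_\bm{p})$ has density $\frac{1}{\rs{vol}(\mc{B}_\bm{p})}\ds{1}_{\mc{B}_\bm{p}}$ by \eqref{equ:uni_lp}, we immediately get $\bb{E}_\bs{x}[\bs{x}^{\bg{\alpha}}]=\frac{1}{\rs{vol}(\mc{B}_\bm{p})}\int_{\mc{B}_\bm{p}}\bm{x}^{\bg{\alpha}}\,d\bm{x}$, so all the content lies in evaluating the integral.

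To obtain the ``$0$ otherwise'' branch I would use parity: $\mc{B}_\bm{p}$ is invariant under each reflection $x_n\mapsto -x_n$ because $\lvert x_n\rvert^{p_n}$ is unchanged, so if any $\alpha_n$ is odd the integrand $\bm{x}^{\bg{\alpha}}$ is odd in $x_n$ over a symmetric domain and the integral vanishes. Hence it suffices to treat $\bg{\alpha}\in 2\bb{N}_0^N$, in which case $\bm{x}^{\bg{\alpha}}$ is even in every coordinate and $\int_{\mc{B}_\bm{p}}\bm{x}^{\bg{\alpha}}\,d\bm{x}=2^N\int_{\{\bm{x}\geq\bold{0}:\,\sum_n x_n^{p_n}\leq 1\}}\bm{x}^{\bg{\alpha}}\,d\bm{x}$.

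On the positive orthant I would apply the substitution $u_n=x_n^{p_n}$, i.e. $x_n=u_n^{1/p_n}$ with $dx_n=\frac{1}{p_n}u_n^{1/p_n-1}\,du_n$, which maps the region bijectively onto the standard simplex $\Delta:=\{\bm{u}\geq\bold{0}:\sum_n u_n\leq 1\}$; here $x_n^{\alpha_n}=u_n^{\alpha_n/p_n}$ is unambiguous since $\alpha_n\in\bb{N}_0$ and $u_n\geq 0$. Collecting exponents turns the integral into $\frac{1}{\prod_n p_n}\int_\Delta \prod_n u_n^{c_n-1}\,d\bm{u}$ with $c_n:=(\alpha_n+1)/p_n>0$, the positivity ensuring integrability despite the possible singularity at $u_n=0$.

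The remaining step, evaluating the Dirichlet integral $\int_\Delta \prod_n u_n^{c_n-1}\,d\bm{u}=\frac{\prod_n\Gamma(c_n)}{\Gamma(1+\sum_n c_n)}$, is the one I expect to be the main obstacle if a self-contained derivation is wanted; the cleanest route is the Laplace/scaling argument underlying Dirichlet's original computation mentioned before Lemma~\ref{lem:volbp}. Writing $F(T):=\int_{\{\bm{u}\geq\bold{0},\,\sum_n u_n\leq T\}}\prod_n u_n^{c_n-1}\,d\bm{u}$, the rescaling $\bm{u}\mapsto T\bm{u}$ gives $F(T)=F(1)\,T^{C}$ with $C:=\sum_n c_n$. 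Evaluating the product of one-dimensional Gamma integrals $\int_{\bb{R}_+^N}\prod_n u_n^{c_n-1}e^{-\sum_n u_n}\,d\bm{u}=\prod_n\Gamma(c_n)$ by slicing along $T=\sum_n u_n$ yields $\int_0^\infty e^{-T}F'(T)\,dT=F(1)\,\Gamma(C+1)$, and equating the two expressions gives $F(1)=\prod_n\Gamma(c_n)/\Gamma(C+1)$. Substituting $c_n=(\alpha_n+1)/p_n$ and restoring the factor $2^N/\prod_n p_n$ reproduces the claimed formula; setting $\bg{\alpha}=\bold{0}$ recovers Lemma~\ref{lem:volbp} as a consistency check.
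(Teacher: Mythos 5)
Your proof is correct, but it takes a different route from the paper's. The paper's proof applies the substitution $x_n \mapsto y_n^{1/(\alpha_n+1)}$, whose Jacobian $\prod_{n}\frac{1}{\alpha_n+1}\lvert y_n\rvert^{-\alpha_n/(\alpha_n+1)}$ exactly absorbs the monomial, so the integral becomes $\prod_{n}\frac{1}{\alpha_n+1}$ times the plain volume of another generalized ball $\mc{B}_{\bm{p}^\prime}$ with $p_n^\prime = p_n/(\alpha_n+1)$, at which point the volume formula of Lemma~\ref{lem:volbp} is invoked as a black box (its proof is cited to the literature, not given). You instead substitute $u_n = x_n^{p_n}$ to land on the Dirichlet integral $\int_\Delta \prod_n u_n^{c_n-1}\,d\bm{u}$ over the standard simplex, and then prove that formula from scratch via the homogeneity identity $F(T)=F(1)T^C$ combined with the factorization $\prod_n\Gamma(c_n)=F(1)\Gamma(C+1)$, which is precisely the Laplace-transform argument of Dirichlet alluded to before Lemma~\ref{lem:volbp}. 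The trade-off: the paper's reduction is two lines but leans entirely on the unproved volume lemma, whereas yours is fully self-contained and even re-derives Lemma~\ref{lem:volbp} as the special case $\bg{\alpha}=\bold{0}$, as you note. Your write-up is also slightly more careful on two points the paper glosses over: you make the restriction to the positive orthant with the factor $2^N$ explicit (the paper's substitution is stated on $\bb{R}_+^N$ but this step is left implicit), and you flag integrability near $u_n=0$ via $c_n=(\alpha_n+1)/p_n>0$. The parity argument for the odd case and the trivial expectation identity are handled identically in both.
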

\begin{proof}
The proof is deferred to Appendix \ref{app:A}.
\end{proof}
Of course, $\rs{vol}(\mc{B}_\bm{p})$ can be obtained similarly as a special case of Lemma \ref{lem:mon_moment} using $\bg{\alpha}=\bold{0}$.
In the subsequent analysis, we also need to evaluate higher-order statistics of an inner product of $\bs{x}$ and some given $\bm{u}\in\bb{R}^N$, which is formalized in the following Lemma.
\begin{lemma}[Higher-order inner-product statistics]\label{lem:power_iprod}
Let $\bs{x}\sim\mc{U}(\mc{B}_\bm{p})$, $\bg{\alpha} \in \bb{N}_{0}^N$, $d \in \bb{N}_0$ and $\bm{u}\in\bb{R}^N$ be a given vector. Then, using
\begin{align}
{d \choose \bg{\alpha}} = \frac{d!}{\prod_{n=1}^N (\alpha_n !)},
\end{align} 
we obtain
\begin{align}
& \bb{E}_\bs{x} \left[ \langle \bm{u}, \bs{x} \rangle^d \right] = \sum\nolimits_{\lVert \bg{\alpha} \rVert_1 = d} {d \choose \bg{\alpha}}  \bm{u}^{\bg{\alpha}} \bb{E}_\bs{x} \left[ \bs{x}^{\bg{\alpha}} \right], \\
& \bb{E}_\bs{x} \left[ \ms{x}_i \langle \bm{u}, \bs{x} \rangle^d \right] = \sum\nolimits_{\lVert \bg{\alpha} \rVert_1 = d} {d \choose \bg{\alpha}}  \bm{u}^{\bg{\alpha}} \bb{E}_\bs{x} \left[ \bs{x}^{\bg{\alpha}+\bm{e}_i} \right], \\
& \bb{E}_\bs{x} \left[ \ms{x}_i \ms{x}_j \langle \bm{u}, \bs{x} \rangle^d \right]= \sum\nolimits_{\lVert \bg{\alpha} \rVert_1 = d} {d \choose \bg{\alpha}}  \bm{u}^{\bg{\alpha}} \bb{E}_\bs{x} \left[ \bs{x}^{\bg{\alpha}+\bm{e}_i + \bm{e}_j} \right],
\end{align}
where $\bm{e}_i$ denotes the $i$-th standard Euclidean basis vector in $\bb{R}^N$.
\end{lemma}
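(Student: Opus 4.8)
The plan is to reduce all three identities to a single application of the multinomial theorem followed by linearity of the expectation. First I would expand the $d$-th power of the inner product by writing $\langle \bm{u}, \bs{x}\rangle = \sum_{n=1}^N u_n \ms{x}_n$ and invoking the multinomial theorem, which gives
\begin{align}
\langle \bm{u}, \bs{x}\rangle^d = \Big( \sum_{n=1}^N u_n \ms{x}_n \Big)^d = \sum_{\lVert \bg{\alpha}\rVert_1 = d} {d \choose \bg{\alpha}} \prod_{n=1}^N (u_n \ms{x}_n)^{\alpha_n} = \sum_{\lVert \bg{\alpha}\rVert_1 = d} {d \choose \bg{\alpha}} \bm{u}^{\bg{\alpha}} \bs{x}^{\bg{\alpha}}.
\end{align}
Since the sum ranges over the finite set of multi-indices $\bg{\alpha} \in \bb{N}_0^N$ with $\lVert \bg{\alpha}\rVert_1 = d$, no convergence issue arises and the expectation may be moved inside termwise. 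Taking $\bb{E}_\bs{x}[\cdot]$ on both sides and using that $\bm{u}$ and the coefficients ${d \choose \bg{\alpha}}$ are deterministic immediately yields the first identity.

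For the remaining two identities I would multiply the expanded power by the monomial factors $\ms{x}_i$ and $\ms{x}_i \ms{x}_j$ before taking expectations. The only additional ingredient is the elementary multi-index bookkeeping $\ms{x}_i \bs{x}^{\bg{\alpha}} = \bs{x}^{\bg{\alpha}+\bm{e}_i}$ and $\ms{x}_i \ms{x}_j \bs{x}^{\bg{\alpha}} = \bs{x}^{\bg{\alpha}+\bm{e}_i+\bm{e}_j}$, which hold because left-multiplication by a coordinate increments the corresponding exponent by one. Substituting these into the expanded sum and again exchanging the finite summation with the expectation gives the second and third claims verbatim.

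There is no genuine obstacle here: each step is a standard algebraic manipulation, and the interchange of sum and expectation is unconditionally valid because the multinomial expansion is a finite sum of products of bounded random variables (the support $\mc{B}_\bm{p}$ is compact). The one point worth flagging is that many of the resulting moments $\bb{E}_\bs{x}[\bs{x}^{\bg{\alpha}}]$, $\bb{E}_\bs{x}[\bs{x}^{\bg{\alpha}+\bm{e}_i}]$ and $\bb{E}_\bs{x}[\bs{x}^{\bg{\alpha}+\bm{e}_i+\bm{e}_j}]$ vanish by the parity condition in Lemma \ref{lem:mon_moment} whenever a shifted exponent is odd; this is not needed to establish the identities themselves, but it is precisely what renders the right-hand sides computable in closed form through that lemma.
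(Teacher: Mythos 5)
Your proof is correct and follows essentially the same route as the paper's: the multinomial expansion of $\langle \bm{u}, \bs{x}\rangle^d$ combined with linearity of the expectation, with the factors $\ms{x}_i$ and $\ms{x}_i\ms{x}_j$ absorbed into the monomial exponents via $\bg{\alpha}\mapsto\bg{\alpha}+\bm{e}_i$ and $\bg{\alpha}\mapsto\bg{\alpha}+\bm{e}_i+\bm{e}_j$. Your added remarks on the finiteness of the sum and the parity-based vanishing of moments from Lemma \ref{lem:mon_moment} are accurate but not needed beyond what the paper's one-line argument already establishes.
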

\begin{proof}
The Lemma follows from an application of the multinomial formula
\begin{align}
 \left( u_1 \ms{x}_1 + u_2 \ms{x}_2 + \hdots + u_N \ms{x}_N \right)^d =  \sum\nolimits_{\lVert \bg{\alpha} \rVert_1 = d} {d \choose \bg{\alpha}}  \bm{u}^{\bg{\alpha}} \bs{x}^{\bg{\alpha}} 
\end{align}
together with the linearity of the expectation operator.
\end{proof}
In Fig. \ref{fig:sparse_process} we illustrate similarities and differences of various sparse processes that can be encountered in literature. The respective probability density functions are given in Tab. \ref{tab:sparse_process}. For a practical algorithm and implementation to generate signals from $\mc{B}_\bm{p}$ we refer the interested reader to \cite{CalDabTem98}, which was also used for the Monte-Carlo simulations in Sec. \ref{sec:sims}.\begin{table}
\centering
\begin{tabular}{ l  l }
\hline \hline
Model & $p_\bs{x}(\bm{x})$ \\
\hline
Gaussian & $\frac{1}{(2 \pi \sigma^2)^{\frac{N}{2}}} e^{ - \frac{ \lVert \bm{x} \rVert_2^2}{\sigma^2} }$ \\ \vspace{.05cm}
Laplace & $\left( \frac{\lambda}{2} \right)^N e^{ - \lambda \lVert \bm{x} \rVert_1 }$ \\ \vspace{-.1cm}
Compound Poisson, & $ \prod_{n=1}^N \Big(e^{-\lambda} \delta(x_n) +  $ \\ 
Gaussian amplitude & \ $+( 1-e^{-\lambda})  \frac{1}{\sqrt{2 \pi \sigma^2}} \exp \left( \frac{ - x_n^2}{2 \sigma^2 } \right) \Big)$\\ \vspace{0.1cm}
uniform $\mc{B}_\bm{p}$ & $\frac{1}{\rs{vol}( \mc{B}_\bm{p} )} \ds{1}_{\mc{B}_\bm{p}}(\bm{x})$ \\
\hline
\vspace{0.01cm}
\end{tabular}
\vspace{-9pt}
\caption{PDFs of various (sparse) processes.}
\label{tab:sparse_process}
\end{table}
\section{Bayesian estimators for signals from $\mc{B}_\bm{p}$}
\subsection{MAP estimation}
We start this section with a brief review of general Bayesian estimators following a standard textbook in the field \cite{Kay93}.
\begin{definition}[MAP estimator]\label{def:map}
Let $\bm{y}$ be defined by \eqref{equ:cs_meas} and $p_\bs{x}(\bm{x})$ be given by \eqref{equ:uni_lp}. A MAP estimate 
\begin{align}
\bh{x}_\rs{map} \in {\rs{argmax}_\bm{x}} \ p_{\bs{y} \vert \bs{x}}(\bm{y}\vert \bm{x}) p_\bs{x}(\bm{x})
\end{align} 
is given by
\begin{align}\label{equ:map1}
\bh{x}_\rs{map} \in {\rs{argmax}_\bm{x}} \ \delta(\bm{y} - \bm{A}\bm{x}) p_{\bs{x}}(\bm{x}).
\end{align}
Here, $\delta(\bm{z})$ denotes the idealized dirac-delta point mass at $\bm{z}=\bold{0}$. We note that \eqref{equ:map1} can be equivalently written as 
\begin{align}
\bh{x}_\rs{map} \in \left( \bm{x}_0 + \rs{null}(\bm{A}) \right) \cap \mc{B}_\bm{p},
\end{align}
where $\bm{x}_0$ is an arbitrary point satisfying $\bm{y} = \bm{A} \bm{x}_0$.
\end{definition}
The MAP estimator provides an excellent estimation performance, but it usually amounts to solving a costly optimization problem rendering it infeasible for most real-time applications. Some relevant examples of such applications  in the field of communications include sparse channel estimation \cite{RajBhaCavAaz02} and sparse multiuser detection \cite{NikYiBayAu14}, where the interest is in the development of dedicated chips based on integrated circuit (IC) architectures that exploit pipelining as well as parallelism. In such settings, even a seemingly simple matrix-inverse is usually avoided as it scales cubic in the number of inputs \cite{RajBhaCavAaz02}. 
\subsection{Linear Bayesian MMSE estimation}
We proceed with low-complexity linear Bayesian MMSE (LMMSE) estimators that, whilst being inferior to the MAP in terms of estimation performance, may be easily implemented and often offer acceptable performance guarantees.
\begin{definition}[Linear Bayesian MMSE]\label{def:lmmse}
Let $\bm{y}$ and $p_\bs{x}(\bm{x})$ be given by \eqref{equ:cs_meas} and \eqref{equ:uni_lp}. The linear Bayesian MMSE estimator $\bm{W}_{\rs{lmmse}}$ 
is the solution to
\begin{align}\label{equ:lmmse}
\bm{W}_{\rs{lmmse}} \in \underset{\bm{W} \in \bb{R}^{N \times M}}{\rs{argmin}} \ \bb{E}_\bs{x} \left[ \lVert \bs{x} - \bm{W}\bm{A}\bs{x} \rVert_2^2 \right],
\end{align}
where the expectation is taken w.r.t. $\bs{x}\sim\mc{U}(\mc{B}_\bm{p})$.
\end{definition}
\begin{theorem}[Linear Bayesian MMSE]\label{th:lmmse}
Let $\bm{y}$ be given by \eqref{equ:cs_meas} and $p_\bs{x}(\bm{x})$ by \eqref{equ:uni_lp}. Assuming that the inverse exists, the optimal linear estimator according to Def. \ref{def:lmmse} can be obtained by
\begin{align}
\bm{W}_\rs{lmmse} = \bm{C}_{\bs{x}} \bm{A}^T \left( \bm{A} \bm{C}_{\bs{x}} \bm{A}^T \right)^{-1}
\end{align}
with $\bm{C}_{\bs{x}} := \bb{E}\left[ \bs{x} \bs{x}^T \right]$ given by
\begin{align}
[\bm{C}_{\bs{x}}]_{i,j} := \bb{E}_\bs{x} \left[ \bs{x}^{\bm{e}_i + \bm{e}_j} \right].
\end{align}
\end{theorem}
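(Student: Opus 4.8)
The plan is to treat \eqref{equ:lmmse} as an unconstrained convex quadratic program in the matrix variable $\bm{W}$, so that a single stationarity condition pins down the global minimizer. First I would push the expectation through the objective. Writing the squared norm as $\lVert \bs{x}-\bm{W}\bm{A}\bs{x}\rVert_2^2 = \bs{x}^T\bs{x} - 2\,\bs{x}^T\bm{W}\bm{A}\bs{x} + \bs{x}^T\bm{A}^T\bm{W}^T\bm{W}\bm{A}\bs{x}$ and using the elementary identity $\bb{E}_\bs{x}[\bs{x}^T\bm{M}\bs{x}] = \rs{tr}\{\bm{M}\,\bb{E}[\bs{x}\bs{x}^T]\} = \rs{tr}\{\bm{M}\bm{C}_\bs{x}\}$, which follows from linearity of expectation and the definition of $\bm{C}_\bs{x}$, the cost collapses to the purely algebraic (no longer probabilistic) expression
\begin{align}
J(\bm{W}) = \rs{tr}\{\bm{C}_\bs{x}\} - 2\,\rs{tr}\{\bm{W}\bm{A}\bm{C}_\bs{x}\} + \rs{tr}\{\bm{W}\bm{A}\bm{C}_\bs{x}\bm{A}^T\bm{W}^T\}. \nonumber
\end{align}

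Second, I would differentiate $J$ with respect to $\bm{W}$ using the standard matrix-calculus rules $\partial_{\bm{W}}\rs{tr}\{\bm{W}\bm{M}\} = \bm{M}^T$ and $\partial_{\bm{W}}\rs{tr}\{\bm{W}\bm{S}\bm{W}^T\} = \bm{W}(\bm{S}+\bm{S}^T)$, exploiting that $\bm{C}_\bs{x}$ and $\bm{S} := \bm{A}\bm{C}_\bs{x}\bm{A}^T$ are symmetric. Setting the gradient to zero gives the normal equation
\begin{align}
\bm{W}\,\bm{A}\bm{C}_\bs{x}\bm{A}^T = \bm{C}_\bs{x}\bm{A}^T, \nonumber
\end{align}
and, under the invertibility of $\bm{A}\bm{C}_\bs{x}\bm{A}^T$ assumed in the statement, right-multiplication by its inverse yields the claimed $\bm{W}_\rs{lmmse} = \bm{C}_\bs{x}\bm{A}^T(\bm{A}\bm{C}_\bs{x}\bm{A}^T)^{-1}$. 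Equivalently, one can bypass the calculus altogether and invoke the orthogonality principle, requiring the error $\bs{x}-\bm{W}\bm{A}\bs{x}$ to be uncorrelated with the data $\bm{A}\bs{x}$; since the measurement model \eqref{equ:cs_meas} is noiseless, the cross-correlation is $\bb{E}[\bs{x}(\bm{A}\bs{x})^T]=\bm{C}_\bs{x}\bm{A}^T$ and the data autocorrelation is $\bm{A}\bm{C}_\bs{x}\bm{A}^T$, reproducing exactly the same normal equation.

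Third, to confirm that this stationary point is the global minimum rather than a saddle, I would note that the quadratic part of $J$ is governed by the positive semidefinite matrix $\bm{S}=\bm{A}\bm{C}_\bs{x}\bm{A}^T$, so $J$ is convex in $\bm{W}$ and stationarity is sufficient for optimality. The individual entries $[\bm{C}_\bs{x}]_{i,j}=\bb{E}_\bs{x}[\bs{x}^{\bm{e}_i+\bm{e}_j}]$ are then supplied in closed form by Lemma \ref{lem:mon_moment}; indeed, for $i\neq j$ the exponent $\bm{e}_i+\bm{e}_j$ has odd components, so the ``otherwise'' branch of \eqref{equ:int_mon} forces those entries to vanish and $\bm{C}_\bs{x}$ is in fact diagonal. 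I do not expect a genuine obstacle, as this is the classical Wiener/LMMSE solution; the only points demanding care are the transpose/symmetry bookkeeping in the matrix derivative and making explicit that the absence of measurement noise is precisely what reduces the cross- and auto-correlations to $\bm{C}_\bs{x}\bm{A}^T$ and $\bm{A}\bm{C}_\bs{x}\bm{A}^T$.
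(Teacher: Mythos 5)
Your proposal is correct and matches the paper's intent exactly: the paper does not write out the argument but simply defers to the standard LMMSE derivation in Kay (Ch.~10ff.), and your completion-of-the-quadratic via the trace identity, the stationarity condition $\bm{W}\bm{A}\bm{C}_\bs{x}\bm{A}^T=\bm{C}_\bs{x}\bm{A}^T$, convexity from positive semidefiniteness of $\bm{A}\bm{C}_\bs{x}\bm{A}^T$, and the equivalent orthogonality-principle route are precisely that textbook proof specialized to the noiseless model \eqref{equ:cs_meas}. Your added observation that Lemma~\ref{lem:mon_moment} forces $[\bm{C}_\bs{x}]_{i,j}=0$ for $i\neq j$ (the exponent $\bm{e}_i+\bm{e}_j$ has odd entries), so $\bm{C}_\bs{x}$ is in fact diagonal here, is a correct and useful refinement the paper leaves implicit.
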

\begin{proof}
The proof is a standard results in Bayesian MMSE estimation (see e.g. \cite[p. 364]{Kay93}).
\end{proof}
\begin{figure}
\centering \subfigure[Gaussian distribution (not sparse) with $\mu=0$, $\sigma=1$.]{
  \includegraphics[width=0.475\linewidth]{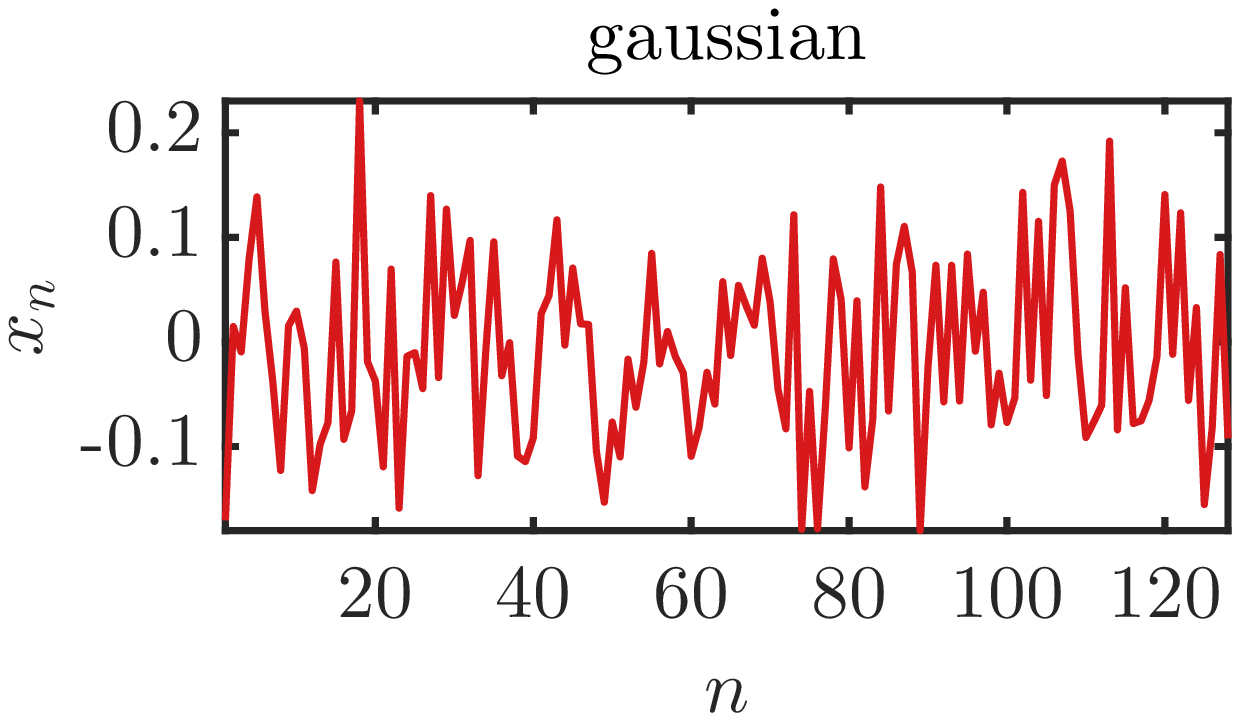}
} \hspace{-.05cm} \subfigure[Laplace distribution with $\lambda=5$.]{
  \includegraphics[width=0.475\linewidth]{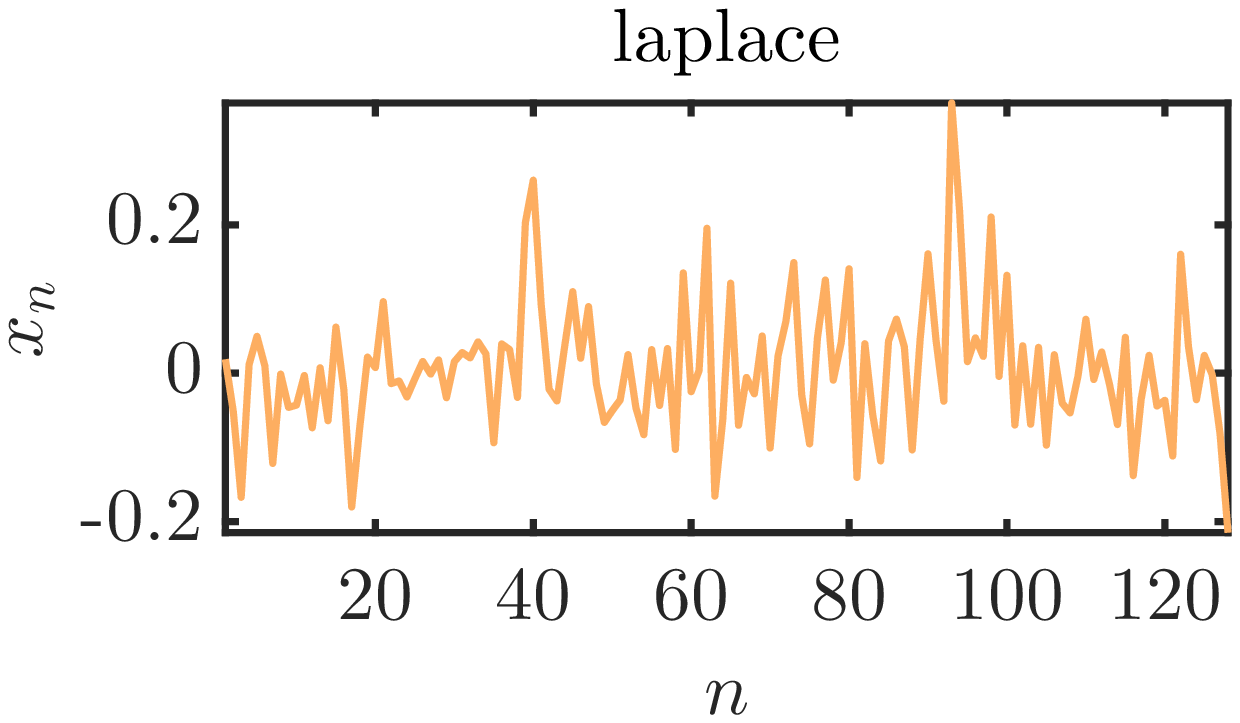}
}
\centering \subfigure[Compount poisson distribution with $\lambda=0.25$, $\mu=0$, $\sigma=1$.]{
  \includegraphics[width=0.475\linewidth]{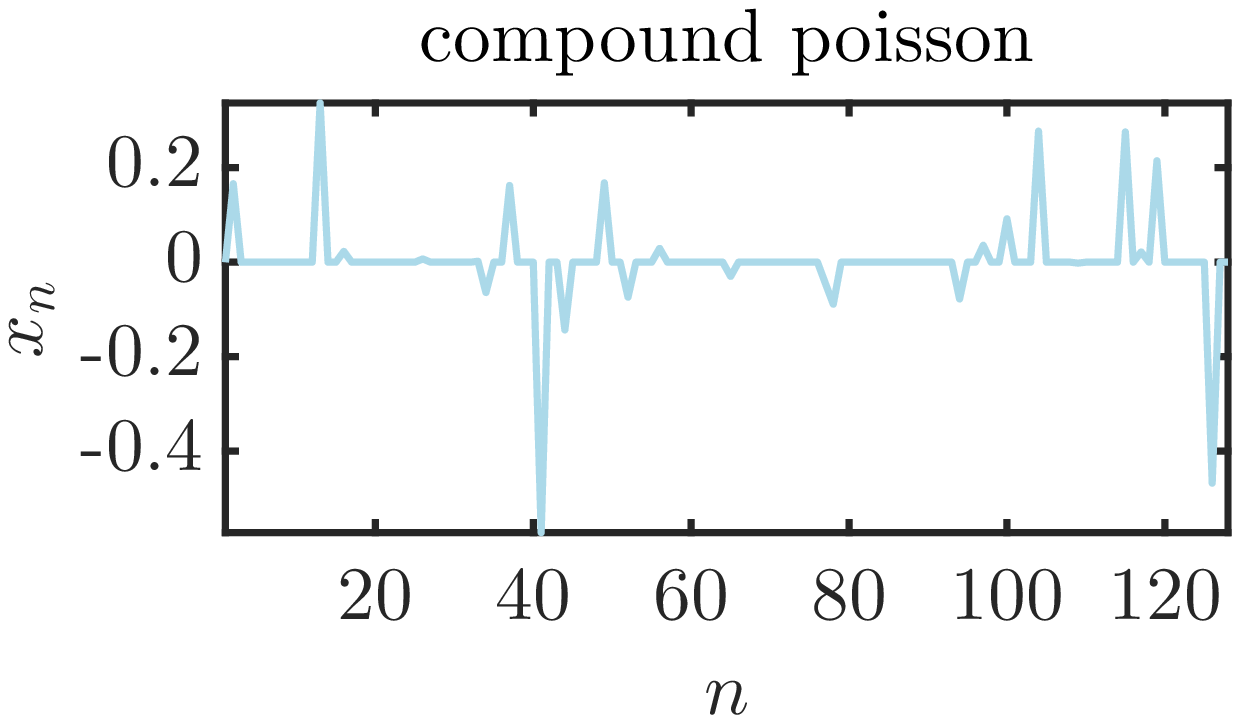}
} \hspace{-.05cm} \subfigure[Uniform $\mc{B}_\bm{p}$ with $\bm{p}=0.33 \cdot \bold{1}$.]{
  \includegraphics[width=0.475\linewidth]{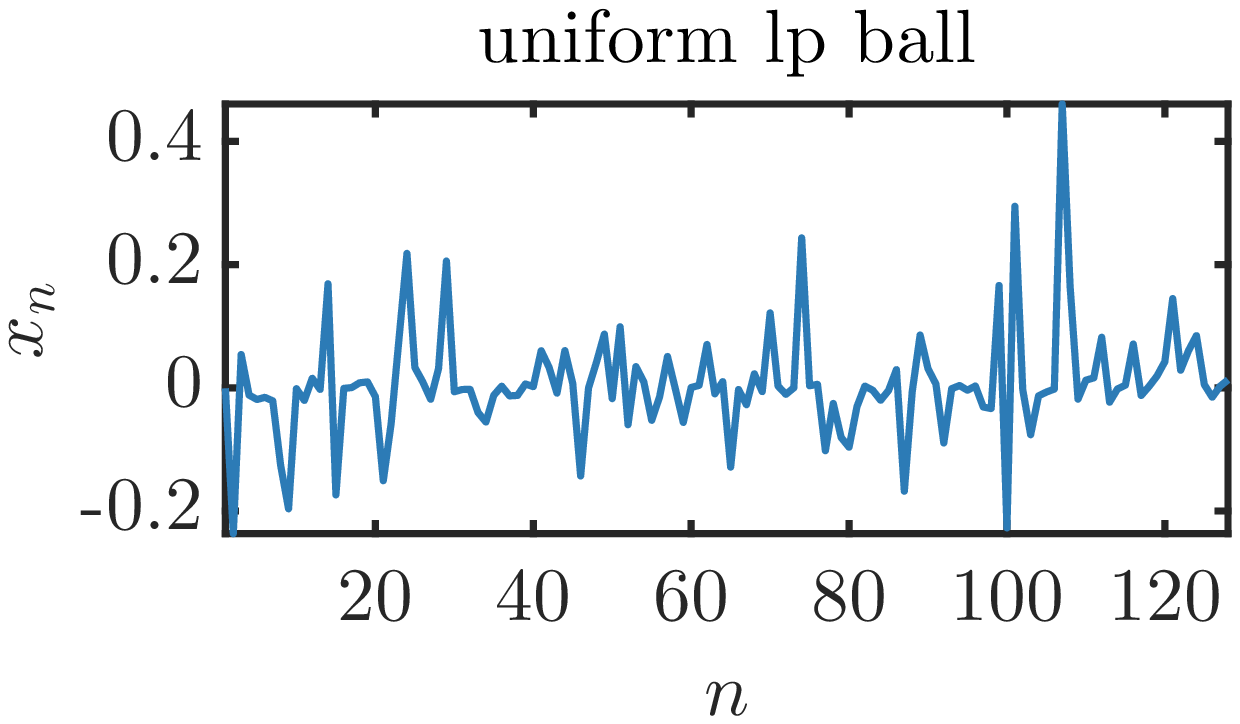}
}
\vspace{-9pt}
\caption{Realizations of various (sparse) processes in $\bb{R}^{128}$. Signals are normalized to unit $\ell_2$-norm.}
\label{fig:sparse_process}
\end{figure}
The corresponding Bayesian MSE is given by
\begin{align}\label{equ:smse}
\varepsilon_\rs{lmse}(\bm{W}) = & \bb{E}_\bs{x} \left[ \lVert \bs{x} - \bm{W} \bm{A} \bs{x} \rVert_2^2 \right] = \rs{tr} \left\{\bm{C}_{\bs{x}} \right\} - 2 \rs{tr} \left\{ \bm{W} \bm{A} \bm{C}_{\bs{x}} \right\} \nonumber \\ & + \rs{tr} \left\{ \bm{A}^T \bm{W}^T \bm{W} \bm{A} \bm{C}_{\bs{x}} \right\}.
\end{align}
\vspace{-18pt}
\subsection{Structured nonlinear estimation}
An increasingly popular technique for recovering sparse signals consists in using a linear mapping followed by a Cartesian product of univariate nonlinearities (e.g. classical or learned \textit{iterative shrinkage-thresholding} algorithms \cite{BecTeb09},\cite{KamMan15}) in an alternating fashion. As a conceptual analogue, we propose a nonlinear Bayesian MMSE estimator using a similar structural assumption.
\begin{proposition}[Structured nonlinear MMSE estimator]\label{prop:smmse}
Let $\mc{T} := \mc{T}_1 \times \hdots \times \mc{T}_N : \bb{R} \times \hdots \times \bb{R} \mapsto  \bb{R} \times \hdots \times \bb{R}$ be a Cartesian product of univariate nonlinear mappings and define the sructured Bayesian MMSE (SMMSE) estimator to be of the form
\begin{align}
\bh{x} = \mc{T}\left( \bm{W} \bm{y} \right) = \mc{T}\left( \bm{W} \bm{A} \bm{x} \right),
\end{align}
where for ease of practical realization we further impose equality among the nonlinear mappings, i.e., $\mc{T}_1 = \hdots = \mc{T}_N$.
\end{proposition}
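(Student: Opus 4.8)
The plan is to make this definition operational in exactly the way the abstract anticipates: once the common univariate map is written in canonical polynomial form, the induced Bayesian MSE should collapse to a closed-form function of the higher-order inner-product statistics of Lemma \ref{lem:power_iprod}, after which the estimator can be optimized by alternating minimization. First I would fix a degree $D$ and parametrize the shared nonlinearity as $\mc{T}_i(z) = \sum_{d=0}^{D} c_d z^d$, so that the constraint $\mc{T}_1 = \cdots = \mc{T}_N$ is encoded by a single coefficient vector $(c_0,\dots,c_D)$. Writing $\bm{u}_i^T$ for the $i$-th row of $\bm{W}\bm{A}\in\bb{R}^{N\times N}$, each output coordinate is $\bh{x}_i = \sum_{d=0}^{D} c_d \langle \bm{u}_i,\bs{x}\rangle^d$, which is linear in the coefficients and a degree-$D$ polynomial in the entries of $\bm{W}$.

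Next I would expand the objective as $\varepsilon = \bb{E}_\bs{x}[\lVert \bs{x}-\bh{x}\rVert_2^2] = \sum_{i=1}^{N}\big(\bb{E}_\bs{x}[\ms{x}_i^2] - 2\,\bb{E}_\bs{x}[\ms{x}_i\bh{x}_i] + \bb{E}_\bs{x}[\bh{x}_i^2]\big)$ and reduce every summand to inner-product power moments. The first term is $\bb{E}_\bs{x}[\bs{x}^{2\bm{e}_i}]$ from Lemma \ref{lem:mon_moment}; the cross term is $\sum_{d} c_d\,\bb{E}_\bs{x}[\ms{x}_i\langle\bm{u}_i,\bs{x}\rangle^d]$, which is precisely the second identity of Lemma \ref{lem:power_iprod}; and the quadratic term is $\sum_{d,d'} c_d c_{d'}\,\bb{E}_\bs{x}[\langle\bm{u}_i,\bs{x}\rangle^{d+d'}]$, covered by the first identity. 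Each such moment is a finite multinomial combination of the monomial expectations $\bb{E}_\bs{x}[\bs{x}^{\bg{\alpha}}]$, and those are given in closed form as ratios of Gamma functions by Lemma \ref{lem:mon_moment} (with the parity rule killing every multi-index that has an odd exponent). Hence $\varepsilon$ is an explicit polynomial in $(\bm{W},c_0,\dots,c_D)$, which is the computability I want to establish.

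With a closed-form $\varepsilon$ in hand, I would optimize by alternating between the two blocks of variables. For fixed $\bm{W}$ the objective is a convex quadratic in $(c_0,\dots,c_D)$, so the optimal coefficients solve a $(D+1)$-dimensional linear system assembled from the precomputed moments. For fixed coefficients the residual problem is a polynomial minimization over $\bm{W}$, for which a (full or descent) update can be taken. Since neither half-step increases $\varepsilon$ and $\varepsilon\ge 0$, the resulting monotone sequence of objective values is bounded below and therefore converges, which is the convergence assertion; note that this is convergence of the MSE, not necessarily of the iterates.

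The main obstacle is the $\bm{W}$-subproblem. Because the even-order terms surviving the parity rule make $\varepsilon$ nonconvex in $\bm{W}$, I can only guarantee convergence in objective value rather than convergence to a global minimizer, and I would have to argue (e.g.\ via boundedness of the relevant sublevel set) that the alternating scheme is well defined at each step. A secondary difficulty is keeping the enumeration of multi-indices $\bg{\alpha}$ with $\lVert\bg{\alpha}\rVert_1 \le 2D$ tractable as $N$ grows, which is what ultimately motivates restricting the degree $D$ and exploiting the vanishing of odd monomials.
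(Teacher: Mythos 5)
Your proposal is correct and takes essentially the same route as the paper: Proposition \ref{prop:smmse} is definitional, and the paper operationalizes it exactly as you do --- canonical polynomial parametrization $\mc{T}_i(t)=\sum_{d=0}^D a_d t^d$, reduction of the Bayesian MSE to higher-order inner-product statistics via Lemmas \ref{lem:mon_moment} and \ref{lem:power_iprod}, a closed-form linear-system update \eqref{equ:upd_a} for the coefficients, a gradient-based update for $\bm{W}$, and monotone convergence of the objective via the bounds \eqref{equ:smse_bounds}. The only caveat is that unique solvability of your coefficient system requires positive-definiteness of $\bb{E}_\bs{x}\left[\bm{V}^T\bm{V}\right]$ (your argument only yields positive semidefiniteness), which the paper itself leaves as a conjecture in a footnote.
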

An illustration of the SMMSE estimators' structure is shown in Fig. \ref{fig:smmse_nnet}.\footnote{The limitation to one linear and one nonlinear Cartesian product mapping with presumed identity is linked to the resulting computational complexity and may be overcome by appropriate approximation techniques.}
\begin{figure}[htb]
\centering
 \psfrag{s}[bc][bc]{$\bm{W}$}
 \psfrag{y1}[bc][bc]{${y}_1$}
 \psfrag{yK}[bc][bc]{${y}_M$}
 \psfrag{x1}[bc][bc]{${x}_1$}
 \psfrag{xK}[bc][bc]{${x}_N$}
 \psfrag{t1}[bc][bc]{${\hat{x}}_1$}
 \psfrag{tK}[bc][bc]{${\hat{x}}_N$}
 \psfrag{A}[bc][bc]{$\bm{A}$}
 \psfrag{W}[bc][bc]{$\bm{W}$}
 \psfrag{T}[bc][bc]{$\mc{T}$}
  \includegraphics[width= 0.95\linewidth]{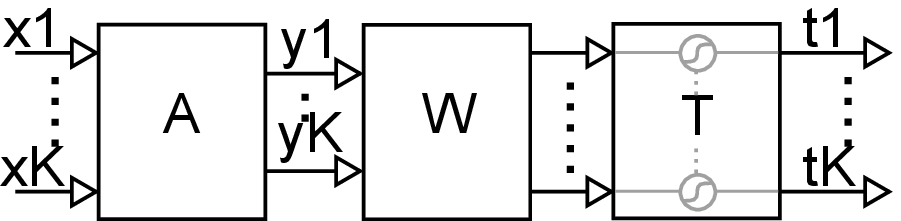}
  \vspace{-9pt}
 \caption{Structured nonlinear Bayesian MMSE estimator composed of a linear map $\bm{W}$ and a Cartesian product of univariate nonlinear maps $\mc{T}:=[\mc{T}_1 \cdots \mc{T}_N]^T$.}
\label{fig:smmse_nnet}
\end{figure}
In this paper, we analyze explicitly the canonical polynomial map 
\begin{align} 
\mc{T}_i(t):= \sum\nolimits_{d=0}^D a_d t^d
\end{align}
resulting in an estimate
\begin{align}
\bh{x} = \sum\nolimits_{d=0}^D a_d \left( \bm{W} \bm{A} \bm{x} \right)^{\odot d}.
\end{align}
Accordingly, the corresponding Bayesian MSE is given by
\begin{align}\label{equ:smse1}
\varepsilon_\rs{smse}(\bm{a},\bm{W}) & =  \bb{E}_\bs{x} \left[ \lVert \bs{x} - \bs{\hat{x}} \rVert_2^2 \right] \nonumber \\
& = \rs{tr} \left\{\bm{C}_{\bs{x}} \right\} - 2 \rs{tr} \left\{\bm{C}_{\bs{x}\bs{\hat{x}}} \right\} + \rs{tr} \left\{\bm{C}_{\bs{\hat{x}}} \right\},
\end{align}
where $\rs{tr} \left\{\bm{C}_{\bs{x}} \right\}$ follows from Th. \ref{th:lmmse}. The two other terms are equal to
\begin{align}
\rs{tr} \left\{\bm{C}_{\bs{x}\bs{\hat{x}}} \right\} & = \rs{tr} \left\{ \bb{E}_\bs{x} \left[ \sum_{d=0}^D a_d \rs{diag}^d\left( \bm{W} \bm{A} \bs{x} \right) \rs{diag}(\bs{x})  \right] \right\} \label{equ:tr11}   \\
& = \bb{E}_\bs{x} \left[ \rs{diag}(\bs{x}) \bm{V} \right] \bm{a} \label{equ:tr12} \\
\rs{tr} \left\{\bm{C}_{\bs{\hat{x}}} \right\} & = \rs{tr} \left\{ \bb{E}_\bs{x} \left[ \sum_{i=0}^D \sum_{j=0}^D a_i a_j \rs{diag}^{i+j} \left( \bm{W} \bm{A} \bs{x} \right) \right] \right\} \label{equ:tr21}  \\
& = \bm{a}^T \bb{E}_\bs{x}\left[ \bm{V}^T \bm{V} \right] \bm{a} \label{equ:tr22},
\end{align}
where we use the convention that $\rs{diag}^0(\bm{u}) = \bm{I}$ and define the Vandermonde matrix $\bm{V}$ as
\begin{align}\label{equ:vand}
\bm{V} := \left[ \bold{1}, (\bm{W} \bm{A}\bs{x})^{\odot 1}, \hdots , (\bm{W} \bm{A}\bs{x})^{\odot D} \right].
\end{align}
To obtain \eqref{equ:tr11}-\eqref{equ:tr22} we define $\bm{U}=[\bm{u}_1,\hdots,\bm{u}_n]^T:=\bm{W}\bm{A}$ and apply Lemma \ref{lem:power_iprod} to compute the required expectations entrywise:
\begin{align}
&\left\{ \bb{E}_\bs{x} \left[ \bs{x}^T \bm{V} \right] \right\}_{i,j}  = \bb{E}_\bs{x} \left[ \ms{x}_i \langle \bm{u}_i,\bs{x} \rangle^{j-1} \right] \\
&\qquad \qquad \ \forall \{i,j\}  \in \{1,\hdots,N\}\times\{1,\hdots,D+1\}, \nonumber \\
&\left\{ \bb{E}_\bs{x} \left[ \bm{V}^T \bm{V} \right] \right\}_{i,j}  = \sum_{n=1}^N \bb{E}_\bs{x} \left[ \langle \bm{u}_n,\bs{x} \rangle^{i+j-2} \right] \\
& \qquad \qquad \ \forall \{i,j\} \in \{1,\hdots,N\}^2. \nonumber
\end{align}
\section{Alternating minimization of the SMSE}
The aim of this section is to derive an algorithmic solution to the minimization of the Bayesian SMSE \eqref{equ:smse1}, i.e., solving (approximately) the problem
\begin{align}\label{equ:ex_opt}
\underset{ \substack{\bm{a}\in\bb{R}^{D+1} \\ \bm{W} \in \bb{R}^{N \times M}}}{\min} \ \varepsilon_\rs{smse}(\bm{a},\bm{W}).
\end{align}
The reader should note that for $\bm{a}:=\bm{e}_2 \in \bb{R}^{D+1}$ the problem reduces to the LMMSE setting from Th. \ref{th:lmmse}. As such, the LMMSE estimator is a particular instance of the SMMSE estimator, and therefore it yields an upper bound on the achievable MSE. On the other hand, the integrand (expectation) in \eqref{equ:smse1} is nonnegative for every $\bm{x} \in \mc{B}_\bm{p}$. Hence, we can write
\begin{align}\label{equ:smse_bounds}
0 \leq \underset{ \substack{\bm{a}\in\bb{R}^{D+1} \\ \bm{W} \in \bb{R}^{N \times M}}}{\min} \ \varepsilon_\rs{smse}(\bm{a},\bm{W}) \leq \underset{ \bm{W} \in \bb{R}^{N \times M}}{\min} \ \varepsilon_\rs{lmse}(\bm{W}).
\end{align}
A widely-used algorithm for optimization problems with block partitioned arguments is the alternating minimization algorithm (AMA) \cite{GriSci00}, which is also often referred to as block coordinate descent method \cite{Ber99}, given in Alg. \ref{alg:altopt} for Problem \eqref{equ:ex_opt}.
\begin{algorithm}[h]\label{alg:altopt}
\SetKwInput{KwData}{\textbf{Input}}
\SetKwInput{KwResult}{\textbf{Output}}
\KwData{$\bm{W}_\star^{(0)}$, $\bm{a}_\star^{(0)}$}
\KwResult{$\bm{W}^\star$, $\bm{a}^\star$}
\For{$k=0,1,\hdots$}{
\begin{subequations}\label{equ:altopt}
\begin{align}
\bm{a}_\star^{(k+1)}  & \in \  \underset{ \bm{a} \in \bb{R}^{D+1}}{\rs{argmin}} \  \varepsilon_\rs{smse}(\bm{a},\bm{W}_\star^{(k)}) \label{equ:altopt.1} \\
\bm{W}_\star^{(k+1)}  & \in \  \underset{ \bm{W} \in \bb{R}^{N \times N}}{\rs{argmin}} \  \varepsilon_\rs{smse}(\bm{a}_\star^{(k+1)},\bm{W}) \label{equ:altopt.2}
\end{align}
\end{subequations}
}
\caption{Alternating minimization algorithm.}
\end{algorithm}
\vspace{-9pt}
The algorithm generates a non-increasing sequence of objective values since
\begin{align}
\forall k \in \bb{N}_0: \ \varepsilon_\rs{smse}(\bm{a}_\star^{(k)},\bm{W}_\star^{(k)}) & \geq \varepsilon_\rs{smse}(\bm{a}_\star^{(k+1)},\bm{W}_\star^{(k)}) \geq \\ & \geq \varepsilon_\rs{smse}(\bm{a}_\star^{(k+1)},\bm{W}_\star^{(k+1)}).
\end{align}
Due to the monotone convergence theorem a first consequence is that Alg. \ref{alg:altopt} converges w.r.t. the MSE objective, since by \eqref{equ:smse_bounds} the objective function is bounded from below. It was shown in \cite{GriSci00} that in convex as well as non-convex settings the generated sequence of solutions $(\bm{a}_\star^{(k)},\bm{W}_\star^{(k)})$ converges to a critical point of problem \eqref{equ:ex_opt} (provided that the generated sequence admits limit points and for each subproblem of Alg. \ref{alg:altopt} the minimum is uniquely attained). The latter non-convex setting indeed applies to Problem \eqref{equ:ex_opt} as can be seen from the optimization variable $\bm{W}$ being the argument of a generally non-convex polynomial map.

We highlight that from a numerical viewpoint the aforementioned convergence to critical points {may not be guaranteed (i.e. we may suffice ourselves with monotone convergence w.r.t. the MSE objective)} since the assumption that \emph{optimal} solutions to every subproblem \eqref{equ:altopt.2} of Alg. \ref{alg:altopt} can be computed is usually violated. For an accompanying numerical implementation of Alg. \ref{alg:altopt}, we first note that if the matrix $\bb{E}_\bs{x} \big[ \bm{V}^{T,(k)} \bm{V}^{(k)} \big]$ is positive-definite,\footnote{We strongly conjecture that this matrix is positive definite. The conjecture is based on extensive numerical simulations. Although a formal proof is missing, the conjecture is assumed to be valid in what follows.} then the first subproblem \eqref{equ:altopt.1} is strictly convex and admits a closed form solution by exploiting the first-order optimality condition
\begin{align}
\frac{\partial}{\partial \bm{a}} \varepsilon & := \begin{bmatrix} \frac{\partial \varepsilon}{\partial a_0} & \cdots & \frac{\partial \varepsilon}{\partial a_D} \end{bmatrix}^T  \\
& = -2 \bb{E}_\bs{x} \left[ \bm{V}^{(k),T} \bs{x} \right] + 2 \bb{E}_\bs{x} \left[ \bm{V}^{(k),T} \bm{V}^{(k)} \right] \bm{a} \stackrel{!}{=} \bold{0} .
\end{align}
Using \eqref{equ:vand} for some given $\bm{W}^{(k)}$ we obtain a numerical solution
\begin{align}\label{equ:upd_a}
\bm{a}_\diamond^{(k+1)} :=  \bb{E}_\bs{x} \left[ \bm{V}^{T,(k)} \bm{V}^{(k)} \right]^{-1} \bb{E}_\bs{x} \left[ \bm{V}^{T,(k)} \bs{x} \right].
\end{align}
For the generally non-convex subproblem \eqref{equ:altopt.2} we propose a numerical implementation based on a simple steepest-descent iteration to find a critical point $\bm{W}_\diamond^{(k)}$ as an approximation to the optimal solution $\bm{W}_\star^{(k)}$ using the following result for the partial derivative defined as
\begin{align}
\frac{\partial}{\partial \bm{W}} \varepsilon  := \begin{bmatrix} \frac{\partial \varepsilon}{\partial {W}_{1,1}} & \cdots & \frac{\partial \varepsilon}{\partial W_{1,M}}  \\
\vdots & \ddots & \vdots \\
\frac{\partial \varepsilon}{\partial {W}_{N,1}} & \cdots & \frac{\partial \varepsilon}{\partial {W}_{N,M}} \end{bmatrix}.
\end{align}
\begin{proposition}\label{prop:gradients}
Let $\rs{tr}\left\{\bm{C}_{\bs{x}\bs{\hat{x}}}\right\}$ and $\rs{tr}\left\{\bm{C}_{\bs{\hat{x}}}\right\}$ be given by \eqref{equ:tr11} and \eqref{equ:tr21}. Then, it holds that
\begin{align}\label{equ:gradw1}
\frac{\partial}{\partial \bm{W}} \rs{tr}\left\{\bm{C}_{\bs{x}\bs{\hat{x}}}\right\} = \bb{E}_\bs{x} \left[ \sum\nolimits_{d=1}^D d a_d \rs{diag}^{d-1}(\bm{W}\bm{A}\bs{x}) \bs{x}\bs{x}^T \bm{A} \right]
\end{align}
and
\begin{align}\label{equ:gradw2}
& \frac{\partial}{\partial \bm{W}} \rs{tr}\left\{\bm{C}_{\bs{\hat{x}}}\right\} = \\
&= \bb{E}_\bs{x} \left[ \sum_{k=0}^D \sum_{\substack{l=0 \\ [k,l]\neq \bold{0}}}^D (k+l)a_k a_l \rs{diag}^{k+l-1}(\bm{W}\bm{A}\bs{x}) \bold{1} \bs{x}^T \bm{A}^T \right]. \nonumber
\end{align}
\end{proposition}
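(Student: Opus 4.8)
The plan is to compute each matrix derivative entrywise, using the chain rule for the linear map $\bm{z} := \bm{W}\bm{A}\bs{x}$, and then to recognize the resulting array of scalars as the claimed matrix product. I would start from the trace forms \eqref{equ:tr11} and \eqref{equ:tr21} and first collapse them to scalar polynomial sums. Using $\rs{tr}\{\rs{diag}(\bm{u})\rs{diag}(\bm{v})\} = \sum_n u_n v_n$ and $\rs{tr}\{\rs{diag}(\bm{u})\} = \sum_n u_n$, together with the convention $\rs{diag}^0 = \bm{I}$, the two objectives read $\rs{tr}\{\bm{C}_{\bs{x}\bs{\hat{x}}}\} = \bb{E}_\bs{x}[\sum_n \ms{x}_n \sum_{d=0}^D a_d z_n^d]$ and $\rs{tr}\{\bm{C}_{\bs{\hat{x}}}\} = \bb{E}_\bs{x}[\sum_n \sum_{i=0}^D \sum_{j=0}^D a_i a_j z_n^{i+j}]$, i.e. expectations of scalar polynomials in the entries of $\bm{z}$.

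Before differentiating I would justify interchanging $\partial/\partial W_{p,q}$ with $\bb{E}_\bs{x}[\cdot]$: since $\bs{x}$ is supported on the compact set $\mc{B}_\bm{p}$ and the integrand is a polynomial in the entries of $\bm{W}$ and $\bs{x}$, all partial derivatives are continuous and locally uniformly bounded, so differentiation under the integral sign is legitimate by dominated convergence. The one elementary ingredient is then $\partial z_n / \partial W_{p,q} = \delta_{np}(\bm{A}\bs{x})_q$, which follows from $z_n = \sum_q W_{n,q}(\bm{A}\bs{x})_q$. Applying this to the first objective, only the $d\geq 1$ terms survive and the $(p,q)$ entry becomes $\bb{E}_\bs{x}[\sum_{d=1}^D d\,a_d\, z_p^{d-1}\ms{x}_p (\bm{A}\bs{x})_q]$; for the second, only the $(i,j)\neq(0,0)$ terms survive and the $(p,q)$ entry becomes $\bb{E}_\bs{x}[\sum_{i,j}(i+j)a_i a_j\, z_p^{i+j-1}(\bm{A}\bs{x})_q]$, which explains the exclusion $[k,l]\neq\bold{0}$ in \eqref{equ:gradw2}.

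The final step is to repackage these scalar entries into matrix form. I would verify the two index identities $[\rs{diag}^{d-1}(\bm{z})\bs{x}\bs{x}^T\bm{A}^T]_{p,q} = z_p^{d-1}\ms{x}_p(\bm{A}\bs{x})_q$ and $[\rs{diag}^{k+l-1}(\bm{z})\bold{1}\bs{x}^T\bm{A}^T]_{p,q} = z_p^{k+l-1}(\bm{A}\bs{x})_q$, each a one-line computation using $[\bm{A}^T]_{s,q}=A_{q,s}$ and $\sum_s \ms{x}_s A_{q,s}=(\bm{A}\bs{x})_q$; the factor $\bold{1}$ in the second case simply copies $z_p^{k+l-1}$ into every row. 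The main obstacle is purely bookkeeping — correctly tracking which row index of $\bm{W}$ is frozen by the Kronecker delta and recognizing the outer-product structure that collapses the entrywise result back to a single matrix — and no analytic tool beyond the chain rule and the two trace identities is needed. I would also note for dimensional consistency (both derivatives must be $N\times M$, matching $\bm{W}$) that the trailing factor in \eqref{equ:gradw1} should read $\bm{A}^T$ rather than $\bm{A}$, in agreement with \eqref{equ:gradw2}.
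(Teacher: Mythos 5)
Your proof is correct, and it reaches the result by a genuinely different route than the paper. The paper's Appendix~\ref{app:B} stays at the matrix level: it rewrites $\rs{diag}^d(\bm{W}\bm{A}\bs{x})$ via Hadamard powers as $\bm{I}\odot(\bm{W}\bm{A}\bs{x}\bold{1}^T)^{\odot d}$ and then invokes trace-derivative identities from \cite{PePe08} ($\frac{\partial}{\partial\bm{W}}\rs{tr}\{g(\bm{W})\}=g^\prime(\bm{W})^T$ and $\frac{\partial}{\partial\bm{W}}\rs{tr}\{\bm{W}\bm{A}\}=\bm{A}^T$) in a chain-rule cascade, with the second gradient obtained by replacing $\rs{diag}(\bs{x})$ with $\bm{I}$. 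You instead collapse both traces to scalar polynomials in $z_n=(\bm{W}\bm{A}\bs{x})_n$, differentiate entrywise via $\partial z_n/\partial W_{p,q}=\delta_{np}(\bm{A}\bs{x})_q$, and repackage the resulting array into the outer-product form. Your route is more elementary and arguably more airtight: the paper's intermediate line in \eqref{equ:appb} is notationally loose (the ``gradient factor'' is inserted inside the Hadamard expression without a precise meaning), whereas your index computation leaves nothing implicit. You also supply two things the paper omits: an explicit justification for differentiating under $\bb{E}_\bs{x}[\cdot]$ (immediate here since $\mc{B}_\bm{p}$ is compact and the integrand is polynomial, but worth stating), and the correct observation that the trailing factor in \eqref{equ:gradw1} must be $\bm{A}^T$ rather than $\bm{A}$ for the gradient to be $N\times M$ --- a typo in the proposition statement that the paper's own appendix implicitly confirms, since the final line of \eqref{equ:appb} ends in $\bs{x}\bs{x}^T\bm{A}^T$. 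What the paper's approach buys is brevity and a template that generalizes to other matrix functionals without index bookkeeping; what yours buys is a self-contained verification, including the survival analysis of the $d\geq 1$ and $[k,l]\neq\bold{0}$ terms, which the matrix-calculus shorthand glosses over.
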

\begin{proof}
The proof is deferred to Appendix \ref{app:B}.
\end{proof}
To compute the expectations in Prop. \ref{prop:gradients} we use Lemma \ref{lem:power_iprod} and evaluate the matrix numerically to obtain
\begin{align}
& \left[ \frac{\partial}{\partial \bm{W}} \rs{tr}\left\{\bm{C}_{\bs{x}\bs{\hat{x}}}\right\} \right]_{i,j} = \sum_{d=1}^D d a_d \bb{E}_\bs{x} \left[ \ms{x}_i\ms{x}_j \langle \bm{u}_i, \bs{x} \rangle^{d-1} \right] , \\
& \left[ \frac{\partial}{\partial \bm{W}} \rs{tr}\left\{\bm{C}_{\bs{\hat{x}}}\right\} \right]_{i,j} = \sum_{k=0}^D \sum_{\substack{l=0 \\ [k,l]\neq \bold{0} }}^D (k+l)a_k a_l \bb{E}_\bs{x} \left[ \ms{x}_j  \langle \bm{u}_i, \bs{x} \rangle^{i+j-1} \right]  \nonumber
\end{align}
$\forall \{i,j\} \in \{1,\hdots,N\}^2$. A further description of the numerical implementation is provided in the following Section.

\section{Numerical Results}\label{sec:sims}
To obtain the proposed structured Bayesian MMSE estimator, we solve the optimization problem \eqref{equ:altopt} using the update \eqref{equ:upd_a} for \eqref{equ:altopt.1} and a reference implementation of the steepest-descent algorithm with Armijo line-search \cite{BouMis13} using the gradients \eqref{equ:gradw1}, \eqref{equ:gradw2} for \eqref{equ:altopt.2}. 
We evaluate the normalized MSE defined as
\begin{align}
\rs{NMSE}:= \varepsilon(\bm{a}_\diamond,\bm{W}_\diamond)/ \rs{tr}(\bm{C}_\bs{x})
\end{align}
for a set of structurally different sensing matrices $\bm{A}\in \bb{R}^{3 \times 6}$ given as
\begin{enumerate}
\item an \textit{equiangular tight frame} (i.e. $\bm{A}_1:=[\bm{a}_1,\hdots,\bm{a}_6]$ s.t. $\lVert \bm{a}_i \rVert_2 = 1 \ \forall i$ and $\lvert \langle \bm{a}_i,\bm{a}_j \rangle \rvert = {\sqrt{N-M}}/{\sqrt{M(N-1)}}$ $\forall i\neq j$), 
\item a subsampled orthogonal matrix $\bm{A}_2$ (with $\bm{A}_2 \bm{A}_2^T = \bm{I}$), and 
\item a random matrix generated by drawing i.i.d. Gaussian entries followed by a normalization of rows.
\end{enumerate}
The remaining parameters are $\bm{p} := p \cdot \bold{1}$ with $p\in [0.4,2]$ and the polynomial map is set to degree $D=9$. As initial values we use $\bm{a}_\diamond^{(0)}=\bold{0}$ and a scaled Moore-Penrose pseudo-inverse $\bm{W}_\diamond^{(0)}=c \bm{A}^{\dagger}$, with scaling set to $c=10$ to stabilize the polynomial map, that were found experimentally. The results in terms of the $\rs{NMSE}$ are shown in Fig. \ref{fig:sim_results} and in terms of the optimal nonlinearities of the polynomial map for $\bm{A}_1$ in Fig. \ref{fig:sim_pcoeffs_etf}. For comparison, we also show the results for $\ell_1$-minimization (i.e. $\bm{\hat{x}} \in \rs{argmin}_{\bm{A}\bm{x}=\bm{y}} \bm{x}$) for $p\leq 1$ which were obtained using CVX \cite{GraBoy10}. We note that for this case $\ell_1$-minimization yields an interior point in the convex-hull $\mc{B}_\bold{1} \supseteq \mc{B}_{\bm{p}\leq \bold{1}}$ which should be a good approximation of the MAP estimate \eqref{equ:map1}.
Due to the high complexity of obtaining the optimal numerical parameters $(\bm{a}_\diamond,\bm{W}_\diamond)$ of the structured Bayesian MMSE estimator using the described numerical approximation of Alg. \ref{alg:altopt}, we limit our analysis to the low-dimensional setting and defer the high-dimensional analysis to a future study using e.g. faster approximate methods. We note, that the upper bound $0.5$ of the $\rs{NMSE}$ results from the compression factor $M/N$. It is interesting to see that the nonlinear Bayesian MMSE estimator in conjunction with the equiangular tight frame $\bm{A}_1$ resulted in the highest performance gains, with an approximate performance increase of (i) $20\,\%$, (ii) $15\,\%$ and (iii) $13\,\%$ over (i) the linear estimator (independent of the mapping $\bm{A}$), (ii) the subsampled orthogonal matrix and (iii) the normalized i.i.d. matrix. The optimization to obtain the SMMSE estimator for the predetermined set of sensing matrices and characteristic vectors was performed offline using an Amazon AWS {c4.8xlarge} instance and $36$ parallel threads. In terms of complexity, the estimation of $\bm{\hat{x}}$ given $\bm{A}_{i\in\{1,2,3\}} \bm{x}$ by the SMMSE estimator was observed to be more then a thousand-fold faster than $\ell_1$-minimization on a laptop with i7-2.9 GHz processor.\footnote{In the spirit of reproducible research, the simulation code used to generate the figures is available at \url{https://github.com/stli/MLSP2016_OptNonlin}.}
\vspace{-9pt}
\begin{figure}[htb]
\centering
  \includegraphics[width= .93\linewidth]{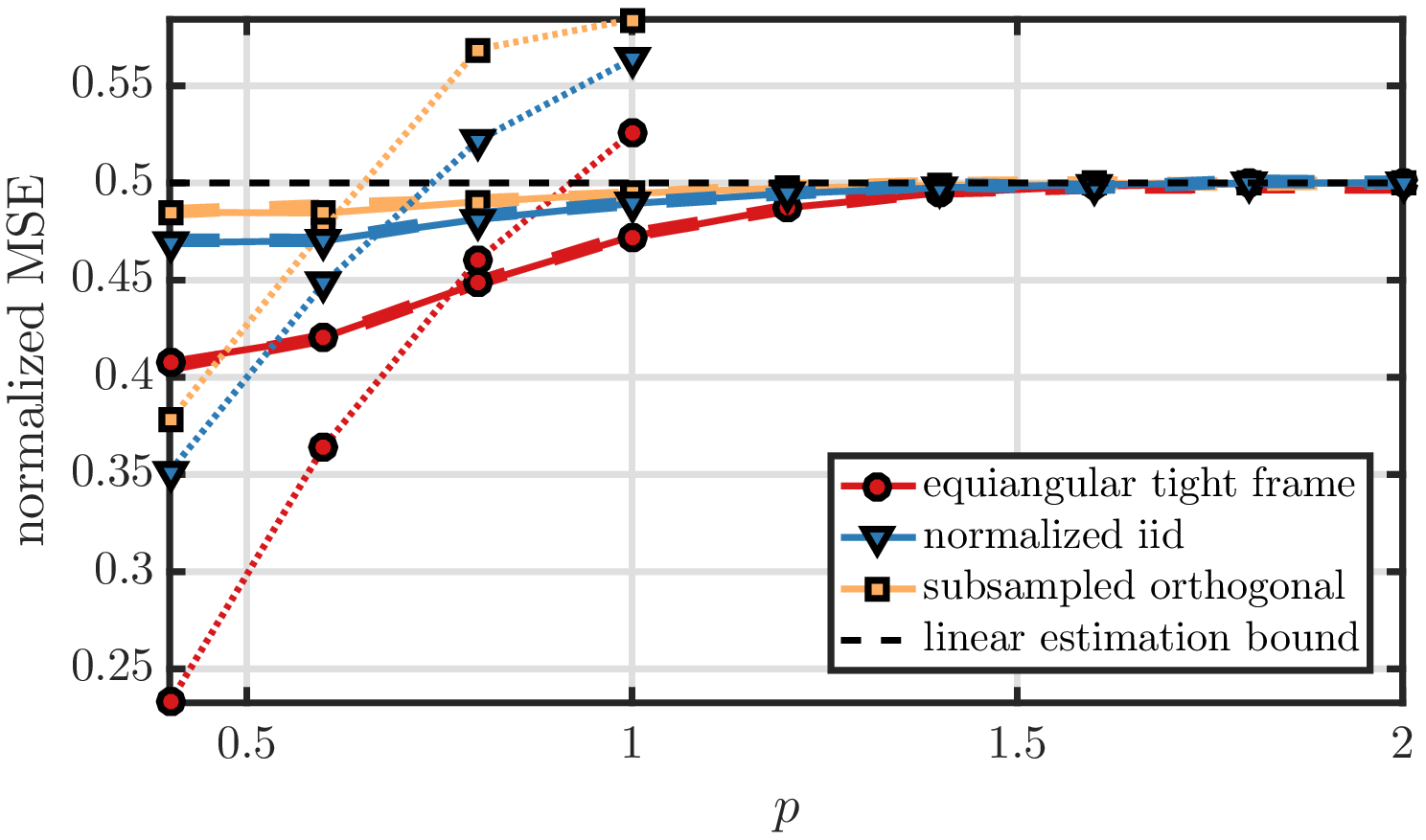}\vspace{-9pt}
 \caption{Normalized MSE for the proposed estimator and varying matrices $\bm{A}$. Analytical results from \eqref{equ:smse} are shown in solid, Monte-Carlo results in dashed and results for $\ell_1$-minimization in dotted linestyle.}
\label{fig:sim_results}
\end{figure}
\vspace{-9pt}
\begin{figure}[htb]
\centering
  \includegraphics[width= .93\linewidth]{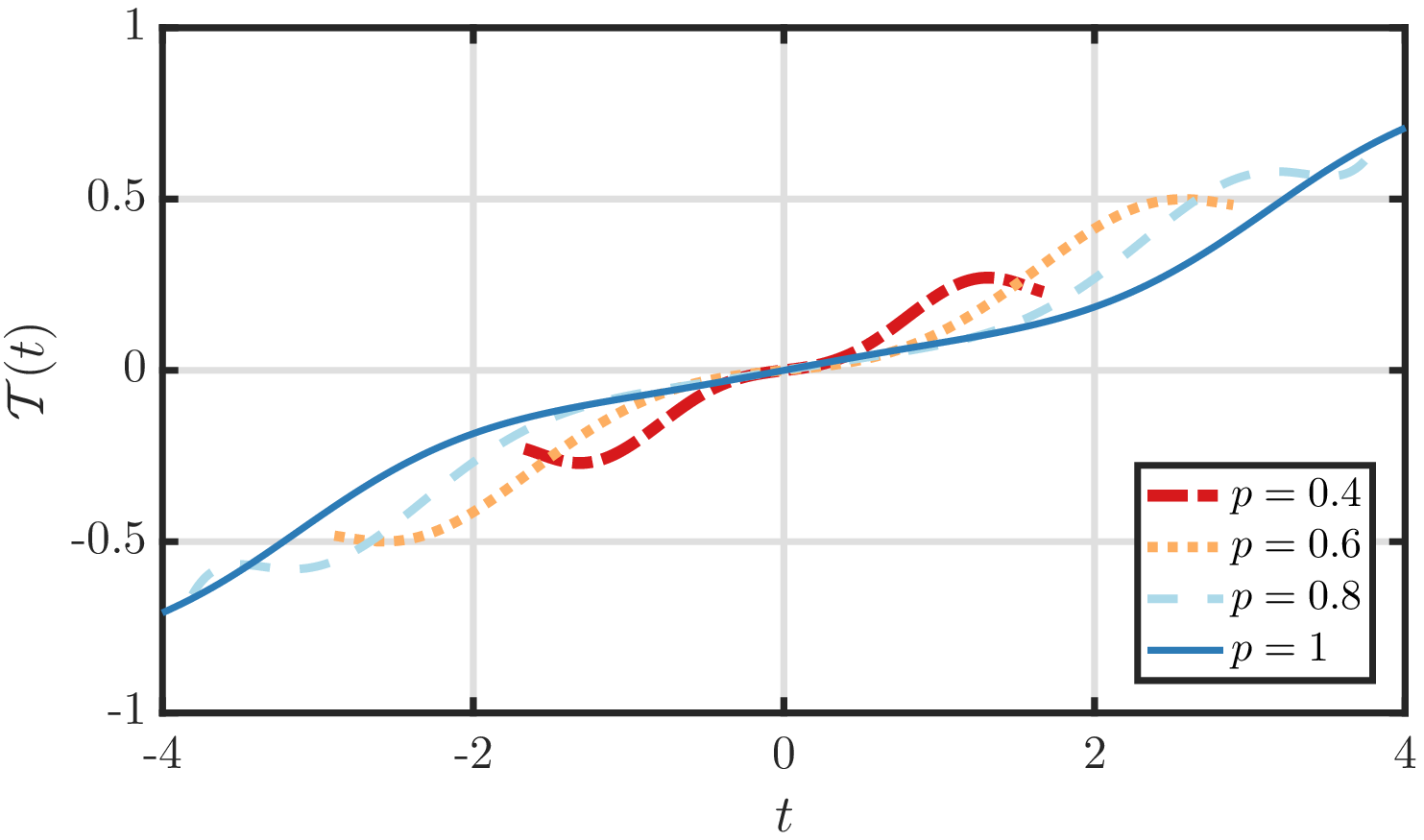}\vspace{-9pt}
 \caption{Optimal nonlinearities $\mc{T}$ for equiangular tight frame $\bm{A}_1$ and varying values of $p$ plotted over $\pm \rs{sup}_{\bm{x} \in \mc{B}_\bm{p}} \lVert \bm{W}_\diamond \bm{A}_1 \bm{x} \rVert_{\infty}$. }
\label{fig:sim_pcoeffs_etf}
\end{figure}
\newpage
\section{Conclusion}
In this paper we proposed a structured nonlinear Bayesian MMSE estimator to recover sparse signals from fixed dimensionality reducing maps. By using alternating optimization to obtain the proposed estimator composed of linear mapping and a Cartesian product of polynomial nonlinearities, we obtain a real-time capable estimator, that we show is comparable to the much more complex $\ell_1$-decoder in the low-dimensional setting. To scale to higher dimensions, a main difficulty is to obtain faster estimates of higher-order inner-product statistics. Also, using different approximation bases with faster convergence properties like trigonometric, rational or Chebyshev polynomials, may be beneficial to achieve even better estimation performance in possibly larger dimensions.


\bibliographystyle{IEEEbib}
\bibliography{refs}

\subsection*{Appendix}
\appendix
\section{Proof of expectation of monomials over $\mc{B}_\bm{p}$}\label{app:A}
Given the symmetry of the integration domain w.r.t. each $x_n$, it follows that the integral vanishes if at least one exponent $\alpha_n$ is odd. For the remaining part we use the fact that $\forall \bg{\alpha} \in 2\bb{N}_0^N$ the injective substitution $\varphi:\bb{R}_+^N \to \bb{R}_+^N: [x_1,\hdots,x_N] \mapsto [y_1^{1/(\alpha_1+1)},\hdots, y_N^{1/(\alpha_N+1)}]$ has Jacobian determinant 
\begin{align}
\lvert \rs{det}(J\varphi) \rvert = \prod_{n=1}^N  \frac{1}{\alpha_n + 1} \left\lvert y_n \right\rvert^{- \frac{\alpha_n}{\alpha_n + 1}}.
\end{align}
The transformed integral of \eqref{equ:int_mon} is then given by
\begin{align}
& \prod_{n=1}^N \frac{1}{\alpha_n + 1} \int_{\Omega^\prime}\prod_{n=1}^N \lvert y_n \rvert^{\frac{\alpha_n}{\alpha_n + 1}} \left\lvert y_n \right\rvert^{- \frac{\alpha_n}{\alpha_n + 1}}  \ d\bm{y} = \\
& = \prod_{n=1}^N \frac{1}{\alpha_n + 1} \int_{\Omega^\prime} 1 \ d\bm{y}, \label{equ:proof11}
\end{align}
with transformed integration domain
\begin{align}
\Omega^\prime = \sum_{n=1}^N \lvert y_n \rvert^{\frac{p_n}{\alpha_n + 1}} =: \mc{B}_{\bm{p}^\prime} \text{ with } \ \forall n: p_n^\prime = \frac{p_n}{\alpha_n+1}. \label{equ:proof12}
\end{align}
Using the volume of generalized balls from \eqref{equ:uni_lp} with the characteristic vector $\bm{p}^\prime$ from \eqref{equ:proof12} in \eqref{equ:proof11} establishes the desired result.

\section{Derivation of partial derivatives}\label{app:B}
Due to linearity we may exchange the roles of trace and expectation and employ the following results on derivatives of traces  \cite{PePe08}
\begin{align}
\frac{\partial}{\partial \bm{W}} \rs{tr} \left\{ g(\bm{W}) \right\} = g^{\prime}(\bm{W})^T \\
\frac{\partial}{\partial \bm{W}} \rs{tr} \left\{ \bm{W} \bm{A} \right\} = \bm{A}^T.
\end{align}
Thus, for $d \in \bb{N}$ we have that
\begin{align}\label{equ:appb}
& \frac{\partial}{\partial \bm{W}}  \rs{tr} \left\{ \bb{E}_\bs{x} \left[ \rs{diag}^d(\bm{W}\bm{A} \bs{x}) \rs{diag}(\bs{x}) \right] \right\}  = \\
& = \frac{\partial}{\partial \bm{W}} \bb{E}_\bs{x} \left[  \rs{tr} \left\{ \bm{I} \odot (\bm{W}\bm{A}\bs{x} \bold{1}^T)^{\odot d}  \rs{diag}(\bs{x}) \right\} \right]  \nonumber \\
& = \bb{E}_\bs{x} \left[ d \bm{I} \odot (\bm{W}\bm{A}\bs{x} \bold{1}^T)^{\odot d-1} \rs{diag}(\bs{x}) \frac{\partial}{\partial \bm{W}} \rs{tr}\left\{ \bm{W}\bm{A}\bs{x}\bold{1}^T \right\} \right]  \nonumber \\
& = d \cdot \bb{E}_\bs{x} \left[ \rs{diag}^{d-1}(\bm{W} \bm{A} \bs{x}) \bs{x} \bs{x}^T \bm{A}^T \right], \nonumber
\end{align}
which proves the first part, while the second part follows along similar lines by replacing $\rs{diag}(\bs{x})$ with $\bm{I}$ in \eqref{equ:appb}.


\end{document}
